\numberwithin{equation}{section}
\theoremstyle{plain}
\newtheorem{theorem}{Theorem}[section]
\newaliascnt{lemma}{theorem}
\newtheorem{lemma}[lemma]{Lemma}
\newaliascnt{corollary}{theorem}
\theoremstyle{definition}
\newaliascnt{definition}{theorem}
\newtheorem{definition}[definition]{Definition}
\newaliascnt{example}{theorem}
\newtheorem{example}[example]{Example}
\newaliascnt{remark}{theorem}
\newtheorem{remark}[remark]{Remark}
\newaliascnt{assumption}{theorem}
\newtheorem{assumption}[assumption]{Assumption}
\newaliascnt{proposition}{theorem}
\newtheorem{proposition}[proposition]{Proposition}
\newcommand{\R}{\mathbf{R}}
\newcommand{\C}{\mathbf{C}}
\renewcommand{\epsilon}{\varepsilon}
\newcommand\Cl{\mathcal{C}\ell}
\newcommand{\abs}[1]{\left\lvert #1 \right\rvert}
\newcommand{\InMb}{\stackrel{\leftarrow}{\mathcal{L}_{M}}}
\newcommand{\InMf}{\stackrel{\rightarrow}{\mathcal{L}_{M}}}
\newcommand{\bo}{{\bf 1}}
\renewcommand{\l}{\lambda}
\DeclareMathOperator{\sgn}{sgn}
\title{ A generalization of the beam problem: Connection to multi-component Camassa–Holm dynamics }
\author{Richard Beals \thanks{Department of Mathematics, Yale University,  New Haven, CT 06520, USA;
E-mail: richard.beals@yale.edu}\and Jacek Szmigielski \thanks{Department of Mathematics \& Statistics and Centre for Quantum Topology and Its Applications (quanTA), University of Saskatchewan, Saskatoon, SK, S7N 5E6, CANADA; 
  E-mail: szmigiel@math.usask.ca} }
\date{\today}
\begin{document}

\maketitle
\begin{abstract} 

We extend the Euler–Bernoulli beam problem, formulated as a matrix string equation with a matrix-valued density, to a setting where the density takes values in a Clifford algebra, and we analyze its isospectral deformations. For discrete densities, we prove that the associated matrix Weyl function admits a Stieltjes-type continued fraction expansion with Clifford-valued coefficients. By mapping the problem from the finite interval to the real line, we uncover a direct link to a multi-component generalization of the Camassa–Holm equation. This yields a vectorized form of the Camassa–Holm equation invariant under arbitrary orthogonal group actions. As an illustration, we examine the dynamics of a two-atom (two-peakon) matrix measure in the special case of a Clifford algebra with two generators and Minkowski signature. Our analysis shows that, even when peakon waves remain spatially separated, they can engage in long-range, synchronized energy exchange.

\end{abstract}
\section{Introduction} 
The Euler-Bernoulli beam problem is a classic ordinary differential equation (ODE) problem:
\begin{equation}\label{eq:EBbeam}
D_x^2(r(x) D_x^2 \varphi)=\lambda^2 m(x) \varphi, \qquad -1<x<1,
\end{equation}
where $r(x)$ represents the flexural rigidity, and $m(x)$ is the mass density of the beam. Both are assumed to be positive functions or, more generally, measures. The spectral parameter $\lambda^2$ represents the square of the frequency. This problem admits various boundary conditions, as discussed, for example, in Barcilon \cite{barcilon-beam-royal}, corresponding to different configurations at the beam's endpoints. The beam problem is a fourth-order differential equation. Its second-order analogue, the string equation:
\begin{equation} \label{eq:string}
-D_x^2\varphi=\lambda^2 m(x) \varphi, \qquad -1<x<1,
\end{equation}
serves as a key example of the effectiveness of the separation of variables applied to the one-dimensional wave equation:
\begin{equation}
\frac{1}{c^2(x)}u_{tt}(x,t)=u_{xx}(x,t),
\end{equation}
where $\frac{1}{c^2(x)}=\frac{m(x)}{T}$, and $T$ denotes tension, which we take to be 1.  
These classical problems turned out to be closely related to a modern theory of shallow water waves, 
in particular, to the Camassa-Holm equation \cite{camassa-holm}
\begin{equation} 
m_t=(um)_x+u_xm, \qquad m=u-u_{xx}. 
\end{equation}
This connection was established in \cite{beals-sattinger-szmigielski:stieltjes}.  More recently, the present authors \cite{beals-szmigielski:2021p:2CH-euler-bernoulli-beam-noncommutative-continued-fractions} established an analogous connection for the beam problem \eqref{eq:EBbeam} and 
a two-component Camassa-Holm equation \footnote{{\color{blue} Unbeknownst to us, an equivalent system was proposed by Geng and Wang in \cite{geng-wang:Coupled-CH}}.}: 
\begin{align} 
&m_t=(um)_x+u_x m-vm, \qquad &n_t&=(un)_x+u_x n+vn, \label{eq:2CH}\\
&u_{xx}-4u=n+m,  \qquad &v_x&=n-m. \label{eq:2CHconstraints}
\end{align} 

The present note extends this relationship by introducing a family of generalizations labelled by orthogonal groups: we demonstrate that given a vector space $W$ and a bilinear, symmetric, non-degenerate form $B$ on $W$, it is possible to associate with this data a vectorization of the Camassa-Holm equation labelled by $B$, which remains invariant under the orthogonal group $O(W)$.  
This generalization has two equivalent formulations.  One  takes place on the finite interval, which we take to be 
$[-1,1]$  and the generalization leads to a family of isospectral deformations of the generalized Euler-Bernoulli beam problem.  The other formulation takes place on the infinite interval $(-\infty, \infty)$ and it gives a muli-component 
version of the Camassa-Holm equation.  The two-component system given by \eqref{eq:2CH}, \eqref{eq:2CHconstraints},  corresponds to $W=\R^2$ and $B$ having the Minkowski signature $(+,-)$.  
Moreover, all these generalizations support non-smooth solitons, usually referred to as \emph{peakons}.  
Peakons were originally proposed in the paper \cite{camassa-holm}.  In the formulation of Camassa and Holm
they are given by the ansatz: 
\begin{equation} \label{eq:upeakons} 
u(x,t)=\sum_{j=1}^N m_j(t) e^{-\abs{x-x_j(t)}}, 
\end{equation} 
where $m_j(t)$ and $x_j(t)$ are smooth functions of $t$.  They can be given a simple mechanical interpretation: 
$m_j$ are momenta, $x_j$ are positions, and they satisfy 
Hamilton's equations of motion
\begin{equation} 
\dot x_j=u(x_j), \qquad \dot m_j=-m_j u_x(x_j).  
\end{equation} 
They also have a dual (wave) interpretation: $m_j(t)$ is the amplitude of a wave with a sharp peak at 
$x_j(t)$, and the resulting wave $u(x,t)$ is a superposition of these individual profiles.  
We discuss a system later in the paper that generalizes Camassa-Holm peakons.

\begin{figure}[H]
\centering
\includegraphics[scale=0.34]{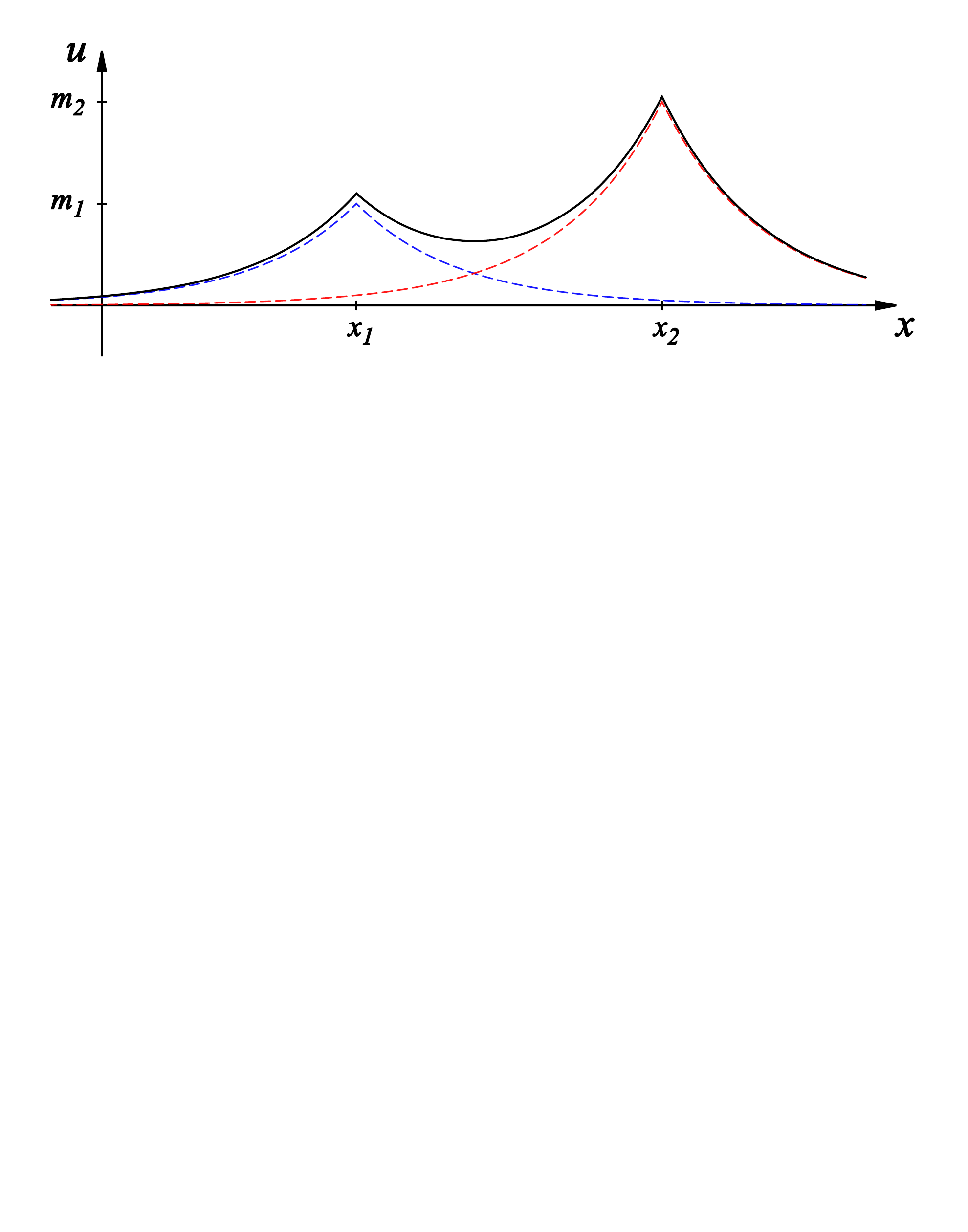}
\caption{Two-peakon solution}
\label{fig:2p}
\end{figure}
\section{ Euler-Bernoulli beam} 
The beam equation in the matrix form reads: 
\begin{equation}\label{eq:Mstring}
D_x^2 \phi=\lambda M \phi, \qquad -1<x<1, \quad M=\begin{bmatrix}0&n\\m&0 \end{bmatrix}, 
\phi=\begin{bmatrix} \phi_1\\ \phi_2 \end{bmatrix}. 
\end{equation} 
Let us review how this equation has been derived in \cite{beals-szmigielski:2021p:2CH-euler-bernoulli-beam-noncommutative-continued-fractions} from the Euler-Bernoulli beam problem $$D_x^2(r(x) D_x^2) \varphi=\lambda^2 m(x) \varphi, \qquad -1<x<1.  $$  Set
\begin{equation} 
\phi_1=\varphi, \qquad D_x^2 \varphi=\lambda n(x) \phi_2, \qquad \text{ where } n(x)=\frac{1}{r(x)},  
\end{equation} 
then 
\begin{equation} 
D_x^2 \phi_1=\lambda n(x) \phi_2, \qquad D_x^2 \phi_2=\lambda m(x) \phi_1, 
\end{equation} 
and the matrix equation \eqref{eq:Mstring} for $\phi=\begin{bmatrix} \phi_1\\ \phi_2\end{bmatrix}$ follows.  

Let us now consider the Dirichlet boundary value problem $\phi(-1)=\phi(1)=0$ and its general isospectral deformation.  
\begin{definition} 
\begin{equation}\label{eq:t-flow-differential form} 
D_t \phi=(\frac{a-b_x}{2} +b\, D_x)\phi,  
\end{equation}
where $a,b$ are $2\times 2$ matrix functions of $x$, a deformation parameter $t$,  and the spectral variable $\lambda$.  
\end{definition} 
Since we are deforming the Dirichlet boundary value problem, we require that $b(-1)=b(1)=0$ to preserve the boundary conditions.  
\begin{lemma}[Symmetric Form of compatibility conditions] \label{lem:ZCC}
\begin{subequations}
\begin{align}
\lambda M_t&=-\frac{b_{xxx}}{2}+\frac{\lambda}{2}\big(\InMf b+b\InMb+
[a,M]\big), \label{eq:t-flow}\\
0&=a_x+\lambda[b,M]. \label{eq:constraints} 
\end{align}
\end{subequations}
where $\InMf=D_x M+MD_x$ acting to the right and $\InMb=D_x M+MD_x$ acting to the left.  
\end{lemma}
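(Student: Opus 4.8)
The plan is to obtain both identities from the single compatibility requirement that the deformation \eqref{eq:t-flow-differential form} be consistent with the beam equation \eqref{eq:Mstring}, i.e. that $D_t\bigl(D_x^2\phi\bigr)=D_x^2\bigl(D_t\phi\bigr)$ for every solution $\phi$. First I would compute the left-hand side: differentiating $D_x^2\phi=\lambda M\phi$ in $t$ and substituting \eqref{eq:t-flow-differential form} gives
\begin{equation*}
D_t\bigl(D_x^2\phi\bigr)=\lambda M_t\,\phi+\lambda M\Bigl(\tfrac{a-b_x}{2}\,\phi+b\,D_x\phi\Bigr).
\end{equation*}
For the right-hand side I would apply $D_x^2$ to \eqref{eq:t-flow-differential form}, expand by Leibniz, and then eliminate the higher derivatives using $D_x^2\phi=\lambda M\phi$ together with its $x$-derivative $D_x^3\phi=\lambda M_x\phi+\lambda M\,D_x\phi$, so that everything is expressed in the rank-two frame $\{\phi,\,D_x\phi\}$. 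Since $\phi$ and $D_x\phi$ are independent over the coefficient algebra (the solution space of \eqref{eq:Mstring} is free of rank two for generic $\lambda$), equating the two sides is equivalent to matching the coefficients of $D_x\phi$ and of $\phi$ separately.

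The coefficient of $D_x\phi$ yields $\lambda M b=a_x+\lambda bM$, that is $a_x+\lambda[b,M]=0$, which is exactly \eqref{eq:constraints}. The coefficient of $\phi$ produces a preliminary form of the $t$-flow in which, besides $-\tfrac12 b_{xxx}$, the term $\tfrac12 a_{xx}$ appears along with various products of $M,M_x,b,b_x,a$. At this point I would differentiate the constraint once in $x$ to get $a_{xx}=-\lambda[b_x,M]-\lambda[b,M_x]$ and substitute; the terms then regroup into $\tfrac\lambda2\bigl((M_x b+2Mb_x)+(bM_x+2b_x M)+[a,M]\bigr)-\tfrac12 b_{xxx}$. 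Recognizing $M_x b+2Mb_x=\InMf b$ and $bM_x+2b_x M=b\InMb$ gives precisely \eqref{eq:t-flow}, and this is where the label ``symmetric'' is justified: $\InMf$ and $\InMb$ are formal transposes of one another, so the right-hand side of \eqref{eq:t-flow} is manifestly invariant under transposition together with $b\mapsto b$, reflecting the self-adjoint character of the pencil \eqref{eq:Mstring}.

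The only real work is the bookkeeping of the non-commuting matrix products: one must keep the order of $M$, $a$, $b$ and their $x$-derivatives throughout, which is where sign and ordering errors would most easily creep in. I expect no conceptual obstacle — the computation is formally the one carried out in \cite{beals-szmigielski:2021p:2CH-euler-bernoulli-beam-noncommutative-continued-fractions}, and it goes through verbatim when the entries of $M$, $a$, $b$ are Clifford-algebra valued, since only associativity of the coefficient ring, never commutativity, is used. (Equivalently, one may rewrite \eqref{eq:Mstring} as a first-order system $D_x\Phi=U\Phi$ with $\Phi=\bigl[\begin{smallmatrix}\phi\\ D_x\phi\end{smallmatrix}\bigr]$, derive $D_t\Phi=V\Phi$ from \eqref{eq:t-flow-differential form}, and read off \eqref{eq:t-flow}--\eqref{eq:constraints} from the block entries of the zero-curvature relation $U_t-V_x+[U,V]=0$; the two off-diagonal blocks vanish identically, the lower-left block gives \eqref{eq:t-flow} and the lower-right block gives \eqref{eq:constraints}.)
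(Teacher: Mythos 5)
Your computation is correct and is exactly the argument the authors intend: the paper states \autoref{lem:ZCC} without proof, and the cross-differentiation $D_t(D_x^2\phi)=D_x^2(D_t\phi)$, followed by elimination of $D_x^2\phi$ and $D_x^3\phi$ via the spectral problem, matching coefficients of $\phi$ and $D_x\phi$, and substituting $a_{xx}=-\lambda[b_x,M]-\lambda[b,M_x]$, reproduces \eqref{eq:constraints} and \eqref{eq:t-flow} with the correct identifications $\InMf b=M_xb+2Mb_x$ and $b\InMb=bM_x+2b_xM$ (consistent with the paper's remark that $\InMf b+b\InMb=(Mb+bM)_x+Mb_x+b_xM$). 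The only blemish is in your parenthetical zero-curvature reformulation: with $U=\bigl[\begin{smallmatrix}0&1\\ \lambda M&0\end{smallmatrix}\bigr]$ the $(2,1)$ block is off-diagonal, so it cannot both ``vanish identically'' and ``give \eqref{eq:t-flow}''; rather the first row of $U_t-V_x+[U,V]=0$ is satisfied identically while the second row yields the two equations — but this does not affect your main derivation.
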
 

\begin{remark} The operator $D_xM+MD_x$ is common to all CH-type equations.  Writing it in the symmetric form 
reveals an important fact: the presence of anticommutators.  Indeed, $\InMf b+b\InMb=(Mb+bM)_x+Mb_x+b_xM$.  So, the deformation theory has both 
anticommutators and commutators (i.e.  the terms $[b,M]$ or $[a,M]$).  This is important for what follows.  In a significant  work on integrable systems on associative algebras 
Olver and Sokolov \cite{Olver-Sokolov:IS-associative} discuss the Hamiltonian structures for systems of this type, one of which appears on the right-hand side of \autoref{eq:t-flow}.\footnote{The authors thank A. Hone and V. Novikov for making us aware of this reference.}
\end{remark}
In \cite{beals-szmigielski:2021p:2CH-euler-bernoulli-beam-noncommutative-continued-fractions}, we considered the simplest ``negative'' flow: $$b=u(x,t)I +\frac{\beta(x)\sigma_1}{\lambda}, \quad a=v(x,t)\sigma_3, $$ where $\sigma_1, \sigma_3$ are Pauli matrices and 
$\beta(x)=1-x^2$.  To study the Dirichlet boundary value problem, we chose to work with the partial fundamental solution 
$\Phi(x,t)$ which is a $2\times 2$ matrix that satisfies 
\begin{equation} 
D_x^2 \Phi=\lambda M \Phi, \quad -1<x<1, \quad \quad \Phi(-1)=0, \Phi_x(-1)=\bo. 
\end{equation} 
Then, as is explained in \cite{beals-szmigielski:2021p:2CH-euler-bernoulli-beam-noncommutative-continued-fractions},  the Dirichlet spectrum $\mathcal{S}=\{\l \in \C: \det \Phi(+1,\lambda)=0\}$, in other words, 
the partial fundamental matrix is not invertible at $x=+1$.  
\subsection{Generalizing the Matrix String equation on the finite interval} \label{sec:Clifford beam}
Let $W$ be a  $d$-dimensional real vector space with a bilinear, symmetric form $B$.  We will initially assume that the bilinear form is non-degenerate with signature $(p,q), p+q=d$. 
The Clifford algebra $\mathcal{C\ell} (W,B)$ is generated by relations 
\begin{equation} \label{eq:Clifdef}
w_1w_2+w_2w_1=2B(w_1,w_2), \qquad w_1, w_2 \in W, 
\end{equation}
or, more concretely, we can choose an orthonormal, relative to $B$, basis $\{e_1,e_2, \cdots, e_d\}$ with $B(e_\mu,e_\nu)=\epsilon_\mu \delta_{\mu\nu}$ where the $\epsilon_\mu$ are $\pm 1$.  
We will drop the reference to the bilinear form $B$ in the notation for simplicity and write $B(u,v)=(u,v)$.  
In terms of the fixed orthonormal basis, the defining relations read
\begin{equation} \label{eq:Clifdef2}
e_\mu e_\nu+e_\nu e_\mu=2\epsilon_\mu \delta_{\mu \nu} , \qquad \mu, \nu  =1,2, \cdots, d. 
\end{equation} 
We recall that $\mathcal{C\ell} (W)$ splits as a vector space
\begin{equation} 
\mathcal{C\ell} (W)=\mathcal{C\ell}_0 (W)\oplus \mathcal{C\ell}_1 (W)\oplus \mathcal{C\ell}_2 (W)\oplus\cdots \oplus \mathcal{C\ell}_d(W). 
\end{equation} 
$\Cl_0 (W)$ is isomorphic to a copy of $\R$ while $\Cl_j(W)$ is spanned by $\binom{n}{j}$ elements $e_{\mu_1}e_{\mu_2}\cdots e_{\mu_j}$ for $\mu_1<\mu_2<\cdots<\mu_j$.  
$\Cl(W)$ is isomorphic as a vector space to the $2^d$ dimensional Grassmann algebra $\Lambda(W)=\bigoplus_{j=0}^n \Lambda ^j(W)$.  $\Cl_j(W)$ elements are assigned the (Grassmann) degree $j$.  
In this section, we make the following assumption about $M$:
\begin{assumption}
$M$ belongs to $\Cl_1$.  Thus 
\begin{equation} \label{eq:assM}
M(x,t)=\sum_{k=1}^d m_k(x,t) e_k.
\end{equation}
\end{assumption} 
\begin{theorem}
Let us consider the Dirichlet Matrix String problem with values in the Clifford algebra $\Cl(W)$
\begin{align} 
&D_x^2 \Phi=\lambda M \Phi, \qquad   -1<x<1,  \\
\Phi(-1, \l)=0,  \qquad &D_x\Phi(-1,\l)=1, \qquad \text{ and \hspace{0.1cm} $\Phi(+1,\lambda)$ is not invertible}, 
\end{align} 
for $M\in \Cl_1(W)$ and the family of isospectral deformations given by equations \eqref{eq:t-flow} and 
\eqref{eq:constraints}.  Then, there exists a solution to equations \eqref{eq:t-flow} and 
\eqref{eq:constraints}, of the form  
\begin{equation} \label{eq:EB}
b(x,t; \lambda)=b_0(x,t)+\frac{b_{-1}(x,t)}{\lambda}, \quad a(x,t; \lambda)=a(x,t), 
\end{equation} 
with $b_0(x,t) \in \Cl_0(W), \, b_{-1}(x,t)\in \Cl_1(W)$ and $a(x,t)\in \Cl_2(W)$.  More explicitly, if 
\begin{align}
&b_{-1}(x,t)=\beta(x) C=\beta(x)\sum_{\mu=1}^d c_\mu e_\mu,\qquad b_0(x,t)=u(x,t),\\
&      a(x,t)=\sum_{\mu<\nu} a_{\mu\nu }(x,t) e_\mu e_\nu, \hspace{3cm}
\end{align}
where $\beta(x)=1-x^2,$ $C$ is a fixed element in $\Cl_1(W)$, and whenever $\mu<\nu$, 
\begin{equation} \label{eq:vx,uxxx}
 \quad a_{\mu\nu,x}(x,t)=2\beta(x)\begin{vmatrix}m_\mu(x,t)&c_\mu\\ m_\nu(x,t)&c_\nu \end{vmatrix}, \quad u_{xxx}(x,t)=2(\InMf \beta(x), C), 
\end{equation} 
then $M(x,t)$ evolves according to 
\begin{equation} \label{eq:Cliffbeam}
M_t(x,t)=\InMf u(x,t)+\frac12[a,M], 
\end{equation} 
or, in components, 
\begin{equation} \label{eq:Cliffbeamcomp}
m_{\mu,t}(x,t)=\mathcal{L}_{m_\mu(x,t)}u(x,t)+\sum_{\nu=1}^d  a_{\mu\nu}(x,t)\epsilon_\nu m_\nu(x,t),   
\end{equation} 
 where  now $a_{\mu\nu}$ is the skew-symmetric extension of the $a_{\mu \nu}$ in \autoref{eq:vx,uxxx}.   
\end{theorem}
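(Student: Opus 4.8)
By \autoref{lem:ZCC}, it suffices to exhibit a solution of the compatibility system \eqref{eq:t-flow}--\eqref{eq:constraints} of the rational form \eqref{eq:EB}; the claimed evolution of $M$ and (once $b$ vanishes at the endpoints) the isospectrality then follow. The plan is to substitute $b=b_0+b_{-1}/\lambda$ with $b_0=u\in\Cl_0$, $b_{-1}=\beta C\in\Cl_1$, and $a\in\Cl_2$ into the two equations and to collect coefficients of the distinct powers of $\lambda$. A first observation is that this ansatz is \emph{grade-consistent}: $b_0$ is a scalar and hence central, the commutator of two elements of $\Cl_1$ lands in $\Cl_2$, their anticommutator is a scalar, and $[\Cl_2,\Cl_1]\subseteq\Cl_1$; so every equation produced by the substitution stays within the grades $\Cl_0\oplus\Cl_1\oplus\Cl_2$ occupied by the unknowns.

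I would first dispatch the constraint \eqref{eq:constraints}. Writing $\lambda[b,M]=\lambda[b_0,M]+[b_{-1},M]$, the $\lambda^1$-coefficient is $[b_0,M]=0$ automatically, and the $\lambda^0$-coefficient reads $a_x=-[b_{-1},M]=\beta(x)[M,C]$. Expanding with the Clifford relations \eqref{eq:Clifdef2} (so $[e_\mu,e_\nu]=2e_\mu e_\nu$ for $\mu\neq\nu$) gives $[M,C]=2\sum_{\mu<\nu}(m_\mu c_\nu-m_\nu c_\mu)e_\mu e_\nu$, which is the first equation of \eqref{eq:vx,uxxx}; one integration in $x$ defines $a$ up to a constant. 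Turning to \eqref{eq:t-flow}: the term $\tfrac{\lambda}{2}(\InMf b+b\InMb+[a,M])$ carries powers $\lambda^1,\lambda^0$, the term $-\tfrac12 b_{xxx}$ carries powers $\lambda^0,\lambda^{-1}$, and the left side $\lambda M_t$ is pure $\lambda^1$. The $\lambda^{-1}$-equation is $b_{-1,xxx}=0$, automatic since $b_{-1}=(1-x^2)C$ is quadratic. The $\lambda^0$-equation is $u_{xxx}=\InMf b_{-1}+b_{-1}\InMb$; using the Remark's identity $\InMf b+b\InMb=(Mb+bM)_x+Mb_x+b_xM$ together with the scalarity $Mb_{-1}+b_{-1}M=2\beta(M,C)$, its right side equals $(2\beta(M,C))_x+2\beta_x(M,C)=2(\InMf\beta,C)$, which is consistent with $u\in\Cl_0$ and is the second equation of \eqref{eq:vx,uxxx}; integrating thrice and imposing $u(\pm1)=0$ (so that, with $\beta(\pm1)=0$, $b(\pm1)=0$) pins down $u$ up to one free constant. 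Finally, the $\lambda^1$-equation is $M_t=\tfrac12(\InMf b_0+b_0\InMb+[a,M])$; since $b_0=u$ is scalar one has $\tfrac12(\InMf u+u\InMb)=\InMf u$, so $M_t=\InMf u+\tfrac12[a,M]$, which is \eqref{eq:Cliffbeam}.

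The component form \eqref{eq:Cliffbeamcomp} follows by expanding $[a,M]$ with $a=\sum_{\mu<\nu}a_{\mu\nu}e_\mu e_\nu$, $M=\sum_k m_k e_k$: from \eqref{eq:Clifdef2} one gets $[e_\mu e_\nu,e_k]=0$ unless $k\in\{\mu,\nu\}$, together with $[e_\mu e_\nu,e_\mu]=-2\epsilon_\mu e_\nu$ and $[e_\mu e_\nu,e_\nu]=2\epsilon_\nu e_\mu$, so that after extending $a_{\mu\nu}$ skew-symmetrically the $e_\mu$-component of $\tfrac12[a,M]$ is $\sum_{\nu=1}^d a_{\mu\nu}\epsilon_\nu m_\nu$; adding the $e_\mu$-component $\mathcal{L}_{m_\mu}u$ of $\InMf u$ reproduces \eqref{eq:Cliffbeamcomp}.

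The genuinely delicate point, where I expect to spend real care, is the boundary analysis underpinning the isospectral reading: one must verify that $b(\pm1)=0$ is exactly the condition making the deformation \eqref{eq:t-flow-differential form} preserve the Dirichlet boundary value problem — including the non-invertibility of $\Phi(+1,\lambda)$ that cuts out the spectrum — and then confirm that the third-order equation $u_{xxx}=2(\InMf\beta,C)$ together with $u(\pm1)=0$ is solvable (it is, with one spare integration constant, so no forced vanishing of $\int_{-1}^1 2(\InMf\beta,C)\,dx$ enters), and likewise that the constant of integration for $a$ can be chosen consistently with the flow. Everything else is bookkeeping within the Clifford algebra, organized by its grading.
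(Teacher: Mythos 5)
Your proposal is correct and follows essentially the same route as the paper: substitute the ansatz $b=b_0+b_{-1}/\lambda$ into \eqref{eq:t-flow}--\eqref{eq:constraints}, separate powers of $\lambda$ to obtain the four grade-wise constraints, resolve them via the Clifford relations (the anticommutator identity giving $2(\InMf\beta,C)$ and the commutators $[e_\mu,e_\nu]$, $[e_\mu e_\nu,e_\rho]$), and read off \eqref{eq:Cliffbeam} and its component form. Your closing remarks on the boundary analysis ($u(\pm1)=0$ and the integration constants) are a sensible amplification of a point the paper only states for $b_{-1}$, not a different argument.
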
 
\begin{proof} 
Using equations \eqref{eq:t-flow}, \eqref{eq:constraints} and \eqref{eq:EB} we get
\begin{equation} 
M_t=\frac12\big(\InMf b_0+b_0\InMb+[a,M]\big) \label{eq:Mt}, 
\end{equation}
and the constraints 
\begin{subequations}
\begin{equation}
b_{0,xxx}=\InMf b_{-1}+b_{-1}\InMb, \label{eq:b0}
\end{equation}
\begin{equation}
b_{-1,xxx}=0, \label{eq:b-1}
\end{equation}
\begin{equation}
[b_0, M]=0, \label{eq:b0M}
\end{equation}
\begin{equation}
a_{x}=[M,b_{-1}].  \label{eq:a0x}
\end{equation}
\end{subequations}

Setting $b_0=u(x,t)$ resolves the constraints \eqref{eq:b0M}.  Likewise, if we set $b_{-1}=\beta(x) C=\beta(x)\sum_{\mu=1}^d c_\mu e_\mu$ then \eqref{eq:b-1} holds.  
In the next step, we compute $\InMf b_{-1}+b_{-1}\InMb$ and $[M,b_{-1}]$ to show that \autoref{eq:vx,uxxx} hold.  Indeed, using \eqref{eq:Clifdef}, we obtain
\begin{equation} 
\begin{split} 
&\InMf b_{-1}+b_{-1}\InMb=(\beta(x)(M(x,t)C+CM(x,t)))_x+\beta_{x}(x)(M(x, t)C+CM(x,t))=\\
 &2B((M(x,t)\beta(x))_x,C)+2B(M(x,t)\beta_x(x),C)=2B(\InMf \beta(x),C).  
 \end{split}
\end{equation} 
Likewise, 
\begin{equation} 
\begin{split}
&[M,b_{-1}]=\beta(x)\sum_{\mu,\nu}m_\mu(x,t)c_\nu[e_\mu,e_\nu]=\\
&\beta(x)\sum_{\mu<\nu}(m_\mu(x,t)c_\nu-m_\mu(x,t) c_\mu)[e_\mu,e_\nu]=2\beta(x)\sum_{\mu<\nu} \begin{vmatrix} m_\mu(x,t)&c_\mu\\m_\nu(x,t)&c_\nu \end{vmatrix} e_\mu e_\nu. 
\end{split} 
\end{equation}
These two formulas, together with \eqref{eq:b0} and \eqref{eq:a0x},  prove \eqref{eq:vx,uxxx}.  Finally, we must compute $\frac12 [a,M]$.  
To this end, we note that 
\begin{equation} 
[e_\mu e_\nu, e_\rho]=2\delta_{\nu \rho} \epsilon_\rho e_\mu-2\delta_{\mu \rho}\epsilon_\rho  e_\nu.  
\end{equation} 
Thus 
\begin{equation} 
\begin{split}
&\frac12[a,M]=\frac12 \sum_{\mu<\nu} \sum_{\rho} a_{\mu \nu}(x,t)m_\rho(x,t)[e_\mu e_\nu, e_\rho]=\\
&\frac12 \big( \sum_{\mu<\nu} a_{\mu \nu}(x,t) 2m_\nu(x,t) \epsilon_\nu e_\mu-\frac12 \sum_{\mu<\nu}a_{\mu \nu}(x,t) 2m_\mu (x,t) \epsilon _\mu  e_\nu \big)=\\
&\sum_{\mu< \nu} a_{\mu \nu }(x,t) m_\nu(x,t)  \epsilon_\nu e_\mu- \sum_{\mu<\nu}a_{\mu \nu }(x,t) m_\mu (x,t) \epsilon_\mu  e_\nu=
\sum_\mu \sum_\nu  a_{\mu \nu}(x,t) m_\nu (x,t) \epsilon_\nu  e_\mu, 
\end{split}
\end{equation} 
with the proviso that $a_{\mu \nu}$, which originally is defined only for $\mu<\nu$,  is extended to a skew-symmetric tensor $a_{\mu \nu}(x,t)$ for all $\mu, \nu$.  Finally, the choice of $\beta(x)=1-x^2$ is dictated by two conditions: 
\begin{enumerate} 
\item $b_{-1},xxx=0$; 
\item $b_{-1}(-1)=b_{-1}(1)=0$  is needed to preserve the Dirichlet boundary conditions.  
\end{enumerate} 
\end{proof} 
It is instructive to write \autoref{eq:Cliffbeamcomp} as a vector equation.  
Let us denote by $\vec{M}=(m_1, m_2, \cdots, m_d), \vec{C}=(c_1, c_2, \cdots, c_d), S=\text{diag}(\epsilon_1, \epsilon_2, \cdots, \epsilon_d)$, the vector of masses, the constant vector $C$, and the signature matrix, respectively.  Then \autoref{eq:Cliffbeamcomp} can be written
\begin{equation} \label{eq:VCliff}
\vec{M}_t=\mathcal{L}_{\vec{M}} u+AS\vec{M}, 
\end{equation} 
where $A=[a_{\mu \nu}], \, \mathcal{L}_{\vec{M}}u=(\vec{M}u)_x+(\vec{M})_x u\, $,  and 
\begin{equation} \label{eq:Vconstraints1} 
\quad a_{\mu\nu,x}(x,t)=2\beta(x)\begin{vmatrix}m_\mu(x,t)&c_\mu\\ m_\nu(x,t)&c_\nu \end{vmatrix}, \quad u_{xxx}(x,t)=2(\mathcal{L}_{\vec{M}} \beta(x), \vec{C}).  
\end{equation}
We recall that the orthogonal group $O(W)$ acts on $\Cl(W)$, preserving $\Cl_1(W)$.  Thus 
it acts on $\vec{M}$.  

\begin{theorem} 
Equations \eqref{eq:VCliff} and \eqref{eq:Vconstraints1} are invariant under the action of 
the orthogonal group $O(W)$.  
\end{theorem}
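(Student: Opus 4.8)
The plan is to make the $O(W)$-action explicit on every object that enters \eqref{eq:VCliff}--\eqref{eq:Vconstraints1} and then to check that each relation is carried to itself. Write $S=\mathrm{diag}(\epsilon_1,\dots,\epsilon_d)$ for the Gram matrix of $B$ in the chosen orthonormal basis, so that $g\in O(W)$ is characterised by $g^{T}Sg=S$ (equivalently $gSg^{T}=S$, and hence $g^{-1}=Sg^{T}S$, since $S^{2}=1$). Recall that an orthogonal $g$ on $W\cong\Cl_1(W)$ extends to a unique grading-preserving algebra automorphism $\rho(g)$ of $\Cl(W)$: on $\Cl_0(W)\cong\R$ it acts trivially, on $\Cl_1(W)$ by the defining representation ($\vec M\mapsto g\vec M$, $\vec C\mapsto g\vec C$), and on $\Cl_2(W)$ the identity $\rho(g)(e_\mu e_\nu)=\sum_{\rho,\sigma}g_{\rho\mu}g_{\sigma\nu}\,e_\rho e_\sigma$, whose $\Cl_0$-part is $(g^{T}Sg)_{\mu\nu}=\epsilon_\mu\delta_{\mu\nu}$ and therefore vanishes for $\mu\neq\nu$, shows that the skew matrix $A=[a_{\mu\nu}]$ transforms by $A\mapsto gAg^{T}$. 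Since $g$ is constant in $x$ and $t$, $\rho(g)$ commutes with $D_x$ and $D_t$, with $\mathcal L_{\vec M}$ and $\InMf$ (built from $D_x$ and left Clifford multiplication), and with the commutator bracket, and it fixes the scalar weight $\beta(x)=1-x^{2}$.

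With these conventions the assertion to prove is that $g\cdot(\vec M,u,A):=(g\vec M,\;u,\;gAg^{T})$ sends every solution of \eqref{eq:VCliff}--\eqref{eq:Vconstraints1} with constant vector $\vec C$ to a solution of the same system with constant vector $g\vec C$; restricting $g$ to the stabiliser of $\vec C$ in $O(W)$ then yields genuine symmetries of the fixed system. First I would handle the constraints \eqref{eq:Vconstraints1}. The first one, in matrix form, reads $A_x=2\beta(\vec M\vec C^{T}-\vec C\vec M^{T})$; applying $g\cdot$ turns the left-hand side into $gA_xg^{T}$ and the right-hand side into $2\beta\big(g\vec M(g\vec C)^{T}-g\vec C(g\vec M)^{T}\big)=g\big[2\beta(\vec M\vec C^{T}-\vec C\vec M^{T})\big]g^{T}$, so both sides transform identically. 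The second constraint $u_{xxx}=2(\mathcal L_{\vec M}\beta,\vec C)$ has an $O(W)$-invariant right-hand side: $\mathcal L_{\vec M}\beta\mapsto\mathcal L_{g\vec M}\beta=g(\mathcal L_{\vec M}\beta)$ because $g$ is constant and $\beta$ scalar, and $g$ preserves $B$; this is consistent with $u\mapsto u$.

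Next I would treat the evolution equation $\vec M_t=\mathcal L_{\vec M}u+AS\vec M$. Under $g\cdot$ the left-hand side becomes $g\vec M_t$, the first term on the right becomes $\mathcal L_{g\vec M}u=g(\mathcal L_{\vec M}u)$, and the second term becomes $(gAg^{T})S(g\vec M)=gA\,(g^{T}Sg)\,\vec M=gAS\vec M$, where the crucial step $g^{T}Sg=S$ is precisely the definition of $O(W)$. Hence the transformed triple satisfies the same evolution equation, with $\vec C$ replaced by $g\vec C$ as dictated by the constraints, which is the claim. I would also record the coordinate-free form: in $\Cl(W)$ the system is \eqref{eq:Cliffbeam}, $M_t=\InMf u+\tfrac12[a,M]$, together with $a_x=\beta\,[M,C]$ and $u_{xxx}=2(\InMf\beta,C)$, and since $\rho(g)$ is an automorphism of $\Cl(W)$ preserving the grading, the bracket, and the restriction of $B$ to $\Cl_1(W)$, and commuting with $D_x$, applying $\rho(g)$ reproduces these three relations verbatim; in fact the original derivation of \eqref{eq:Cliffbeam}--\eqref{eq:vx,uxxx} used only the Clifford relations, the grading, $D_x$, the scalar $\beta$, and the element $C$, all of which are $O(W)$-equivariant, so the invariance is essentially built in.

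The computations are routine; the only point that needs care---the ``obstacle''---is the bookkeeping of how each graded piece transforms, in particular pinning down that $A$ transforms as $gAg^{T}$ rather than $gAg^{-1}$ (these agree only because $g^{-1}=Sg^{T}S$, which reshuffles the signature), and making the status of the fixed vector $\vec C$ precise: all of $O(W)$ permutes the family of systems indexed by $\vec C$, while the stabiliser $O(W)_{\vec C}$ acts on the solution set of each individual system. Everything else reduces to the single identity $g^{T}Sg=S$ together with $\rho(g)$ being an $x$- and $t$-independent Clifford automorphism.
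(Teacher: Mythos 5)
Your proposal is correct and follows essentially the same route as the paper: the paper's proof simply posits the action $\vec M\mapsto R_g\vec M$, $\vec C\mapsto R_g\vec C$, $A\mapsto R_gAR_g^T$, $u\mapsto u$ and invokes $R_g^TSR_g=S$, which is exactly the identity your verification of the term $AS\vec M$ and of the constraints rests on. Your additional remarks (deriving the $gAg^{T}$ rule from the induced Clifford automorphism on $\Cl_2(W)$, and distinguishing the full $O(W)$ acting on the family of systems indexed by $\vec C$ from the stabiliser $O(W)_{\vec C}$ acting on a fixed system) are sound elaborations of points the paper leaves implicit or defers to the discussion following the theorem.
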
 
\begin{proof} Let $g\in O(W)$ and let $R_g$ be a matrix of rotation representing $g$.  
 Consider the action of $O(W)$ given by the following rules
\begin{equation*} 
\vec{M}\rightarrow  R_g\vec{M}, \quad 
\vec{C}\rightarrow R_g\vec{C}, \quad 
A\rightarrow R_gAR_g^T, \quad 
u\rightarrow u.  
\end{equation*} 
$R_g^T S R_g=S$ because the bilinear form $(,)$ is invariant under $O(W)$, and we get the claim.  
\end{proof}

\begin{example}
We consider $W=\R^2$.  The Clifford algebra $\Cl(W)$ is generated by $1, e_1, e_2$ where 
\begin{equation} 
e_1^2=\epsilon_1,\qquad e_2^2=\epsilon_2, \qquad \text{ and } e_1e_2+e_2e_1=0.  
\end{equation}
Let us denote $a_{12}=v$.  Then, the evolution equations \eqref{eq:Cliffbeamcomp} take the form
\begin{align}
&m_{1,t}=(m_1u)_x+u_xm_1+v\epsilon_2 m_2, \\
&m_{2,t}=(m_2u)_x+u_xm_2-v\epsilon_1 m_1,  
\end{align}
while the constraints \eqref{eq:vx,uxxx} are given by
\begin{equation} 
v_x=2\beta(x)\begin{vmatrix} m_1&c_1\\m_2&c_2 \end{vmatrix}, \quad u_{xxx}=2\mathcal{L}_{(M,C)}\beta(x).  
\end{equation} 
Consider now the special choice $c_1=1, c_2=0, \epsilon_1=1, \epsilon_2=-1$ and set $m=m_1+m_2, \, n=m_1-m_2$.  
Then 
\begin{align}
&m_t=\mathcal{L}_m u -v m, &n_t=\mathcal{L}_n u+vn, \\
&u_{xxx}=\mathcal{L}_{m+n} \beta, &v_x=\beta(n-m),   
\end{align}
which reproduces the original deformation equation of the Euler-Bernoulli beam put forward in \cite{beals-szmigielski:2021p:2CH-euler-bernoulli-beam-noncommutative-continued-fractions}.  Note that the Euler-Bernoulli beam is formally attached to the Minkowski signature $(+, -)$.  
\end{example} 
In the remainder of this section, we will discuss the issue of choosing a fixed $C$ as was done in the example 
above.  Fixing $C \in W$ splits $W$ into two orthogonal subspaces, i.e., 
\begin{equation}
W=V\oplus V^\perp, 
\end{equation} 
where $V^\perp=\R C$ and $V$ is its orthogonal complement.  
With the fixed $C$ the theory is now invariant under the orthogonal group $O(V)$ which leaves $V^\perp$ 
invariant.  We use the orthogonal projection $P_V$ on the subspace $V$ later in the paper.  
Also, for the remainder of the paper, we will choose $C=e_1$ following the choice used in the Euler-Bernoulli case.  This gives $V=<e_2,e_3, \cdots, e_d>$.  With this choice, the only non-zero component of $A$ is 
$a_{1\nu, x}=-2\beta(x) m_\nu(x,t), \nu=2,3, \cdots d$.  We now summarize the setup for this case.  

\begin{theorem}
Let us consider the Dirichlet Matrix String problem with values in the Clifford algebra $\Cl(W)$
\begin{align} 
&D_x^2 \Phi=\lambda M \Phi, \qquad   -1<x<1,  \\
\Phi(-1, \l)=0,  \qquad& D_x \Phi(-1, \l)=1, \quad  \text{ and \hspace{0.1cm} $\Phi(+1,\lambda)$ is not invertible}, 
\end{align} 
for $M\in \Cl_1(W)$ and the family of isospectral deformations given by equations \eqref{eq:t-flow} and 
\eqref{eq:constraints}.  Then, there exists a solution to equations \eqref{eq:t-flow} and 
\eqref{eq:constraints}, of the form  
\begin{equation} \label{eq:EBbis}
b(x,t; \lambda)=b_0(x,t)+\frac{b_{-1}(x,t)}{\lambda}, \quad a(x,t; \lambda)=a(x,t), 
\end{equation} 
where 
\begin{align}
&b_{-1}(x,t)=\beta(x) e_1, \qquad b_0(x,t)=u(x,t),\\
&      a(x,t)=e_1 v(x,t), \hspace{3cm}
\end{align}
where $\beta(x)=1-x^2,\,  v(x,t)\in \Cl_1(V)$, and 
\begin{equation} \label{eq:vx,uxxx bis}
 \quad v_x(x,t)=-2\beta(x)P_V M, \quad u_{xxx}(x,t)=2\mathcal{L}_{m_1}\beta(x) \epsilon_1, 
\end{equation} 
then $M(x,t)$ evolves according to 
\begin{equation} \label{eq:Cliffbeambis}
M_t(x,t)=\InMf u(x,t)+(M,v)e_1-(M, e_1)v, 
\end{equation} 
or, in vector form, 
\begin{equation} \label{eq:VCliffbis}
\vec{M}_t=\mathcal{L}_{\vec{M}} u+AS\vec{M}, 
\end{equation} 
where $A=\begin{bmatrix} 0&\vec{v}^T\\-\vec{v}&0 \end{bmatrix}$ and 
\begin{equation} \label{eq:Vconstraints} 
\quad \vec{v}_x=-2\beta(x)P_V \vec{M}, \quad u_{xxx}(x,t)=2\mathcal{L}_{m_1}\beta(x) \epsilon_1. 
\end{equation}
\end{theorem}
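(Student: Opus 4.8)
The plan is to obtain this statement as the specialization of the general theorem proved above in this subsection to the single choice $C=e_1$, that is, $c_1=1$ and $c_\mu=0$ for $\mu=2,\dots,d$; once that substitution is made there is no new analytic content, only bookkeeping inside $\Cl(W)$.

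First I would feed $C=e_1$ into the constraints \eqref{eq:vx,uxxx}. For $\mu<\nu$ the determinant $\begin{vmatrix} m_\mu & c_\mu\\ m_\nu & c_\nu\end{vmatrix}$ equals $0$ when $2\le\mu<\nu$ and equals $-m_\nu$ when $\mu=1<\nu$, so one may take $a_{\mu\nu}\equiv 0$ for $2\le\mu<\nu$ and $a_{1\nu,x}=-2\beta(x)m_\nu(x,t)$ for $\nu\ge 2$. Hence $a=\sum_{\mu<\nu}a_{\mu\nu}e_\mu e_\nu=e_1\sum_{\nu\ge 2}a_{1\nu}e_\nu=e_1 v$ with $v:=\sum_{\nu\ge 2}a_{1\nu}e_\nu\in\Cl_1(V)$, $V=\langle e_2,\dots,e_d\rangle$; differentiating and using $P_V M=\sum_{\nu\ge 2}m_\nu e_\nu$ gives $v_x=-2\beta(x)P_V M$. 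For the remaining constraint, $(\InMf\beta,e_1)$ extracts the $e_1$-component of $\InMf\beta\in\Cl_1(W)$ times $\epsilon_1=(e_1,e_1)$, i.e.\ $\mathcal{L}_{m_1}\beta(x)\,\epsilon_1$, so $u_{xxx}=2\mathcal{L}_{m_1}\beta(x)\,\epsilon_1$. Resolving \eqref{eq:b0M} with a scalar $b_0=u$ and \eqref{eq:b-1} with $b_{-1}=\beta(x)e_1$ is as in the general proof, and $\beta(x)=1-x^2$ is again forced by $b_{-1,xxx}=0$ together with $b_{-1}(\pm 1)=0$, which preserves the Dirichlet conditions.

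Next I would rewrite the commutator term in the evolution. By the computation in the proof of the general theorem, $\frac12[a,M]=\sum_\mu\sum_\nu a_{\mu\nu}(x,t)m_\nu(x,t)\epsilon_\nu e_\mu$ with $a_{\mu\nu}$ the skew-symmetric extension of the entries above, so here the only nonzero $a_{\mu\nu}$ are $a_{1\nu}=-a_{\nu 1}$, $\nu\ge 2$. Splitting the $\mu$-sum into $\mu=1$ and $\mu\ge 2$, the first part is $\big(\sum_{\nu\ge 2}\epsilon_\nu m_\nu a_{1\nu}\big)e_1=(M,v)e_1$ and the second is $\sum_{\mu\ge 2}a_{\mu 1}m_1\epsilon_1 e_\mu=-\epsilon_1 m_1 v=-(M,e_1)v$, giving $\frac12[a,M]=(M,v)e_1-(M,e_1)v$. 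Combined with $M_t=\InMf u+\frac12[a,M]$ from \eqref{eq:Cliffbeam} this is \eqref{eq:Cliffbeambis}, and setting $\vec{v}=(a_{12},\dots,a_{1d})^T$ turns $A=[a_{\mu\nu}]$ into the block form $\begin{bmatrix}0&\vec{v}^T\\-\vec{v}&0\end{bmatrix}$, so \eqref{eq:VCliff}--\eqref{eq:Vconstraints1} specialize to \eqref{eq:VCliffbis}--\eqref{eq:Vconstraints}. The one step that really needs care is this last reassembly of $\frac12[a,M]$: one must follow the skew-symmetric extension attentively and check that the two halves of the $\mu$-sum recombine exactly into the coordinate-free $(M,v)e_1-(M,e_1)v$, with no leftover diagonal or sign-flipped contributions; everything else is direct substitution into already-proved formulas.
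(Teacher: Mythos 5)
Your proposal is correct and matches the paper's intent exactly: the paper states this theorem without a separate proof, presenting it as a summary of the specialization $C=e_1$ of the preceding general theorem (noting only that the sole nonzero components become $a_{1\nu,x}=-2\beta(x)m_\nu$), and your argument carries out precisely that specialization, including the correct reassembly of $\tfrac12[a,M]$ into $(M,v)e_1-(M,e_1)v$. No gaps.
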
 

\subsection{The Weyl function for the discrete Matrix String} 
In this section, we generalize the matrix-valued Weyl function used in the analysis of the beam problem in \cite{beals-szmigielski:2021p:2CH-euler-bernoulli-beam-noncommutative-continued-fractions} to the Clifford case.  The notation is that of \autoref{sec:Clifford beam}.  
We use the partial fundamental matrix $\Phi$, which satisfies 
\begin{equation} \label{eq:DIVP}
D_x^2 \Phi=\l M \Phi, \qquad -1<x<1, \qquad  \Phi(-1,\l)=0, \qquad D\Phi(-1, \l)=\bo. 
\end{equation} 
For simplicity, we will only consider the discrete case, which corresponds to peakons in the beam problem mapped to the real line (see next section). 
Thus, we assume 
that 
\begin{equation} 
M=\sum_{j=1}^N M_j \delta _{x_j}
\end{equation} where all $M_j\in \Cl_1(W)$ are assumed to be invertible.  We can explicitly construct the partial fundamental matrix $\Phi(x, \l)$ by constructing its restrictions to segments $[x_j, x_{j+1}]$.  Indeed, 
let us denote 
\begin{equation} 
\Phi_j=\Phi(x_j-)=\Phi(x_j), \qquad \Phi_j'=D\Phi(x_j-, \l),  \qquad l_j=x_{j+1}-x_j, \qquad 0\leq j\leq N, 
\end{equation} 
with the convention $x_0=-1, \, x_{N+1}=1$ and $\Phi_0=0, \: \Phi'_0=\bo$.  
Then on the interval $[x_j, x_{j+1}]$
\begin{equation} 
\Phi(x,\l)=\Phi_{j+1}'(x-x_j)+\Phi_j, 
\end{equation} 
and the jump condition
\begin{equation} 
\Phi_{j+1}'-\Phi_j'=\l M_j \Phi_j, 
\end{equation} 
which, equivalently, can be written as a recurrence relation
\begin{align} 
\Phi_{j+1}&=\Phi_j+\Phi_{j+1}'l_j, \label{eq:j+1}\\
\Phi_{j+1}'&=\Phi_j'+\l M_j \Phi_j,  \label{eq:prime j+1}.  
\end{align} 
The matrix form of the recurrence relation reads
\begin{equation} \label{eq:Matrecurrence}
\begin{bmatrix}\Phi_{j+1}\\ \Phi_{j+1}' \end{bmatrix}=\begin{bmatrix} 1&l_j\\0&1\end{bmatrix}\begin{bmatrix} 1&0\\\l M_j&1 \end{bmatrix} \begin{bmatrix} \Phi_j\\\Phi_j' \end{bmatrix}=\begin{bmatrix}1+\l l_j M_j&l_j\\\l M_j&1 \end{bmatrix} 
\begin{bmatrix} \Phi_j\\\Phi_j' \end{bmatrix} ,   \qquad 0\leq j\leq N, 
\end{equation} 
where we put $M_0=0$ for convenience.  
\begin{proposition}\label{prop:degrees}
\mbox{}\\
 For all $1\leq j\leq N+1$, the $\lambda$ degrees of $\Phi_j, \Phi_j'$  are: $\deg(\Phi_{j})=\deg(\Phi'_j)=j-1$.  

\end{proposition}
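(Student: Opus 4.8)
The plan is to prove the statement by induction on $j$, but with a \emph{strengthened} hypothesis: for every $1\le j\le N+1$, I will show that $\Phi_j$ and $\Phi_j'$ are polynomials in $\lambda$ with coefficients in $\Cl(W)$, of degree exactly $j-1$, and with \emph{invertible} leading coefficient. The reason for carrying the invertibility along is the following subtlety: in a Clifford algebra the product of two nonzero elements need not be nonzero, so when we read off the leading term of a product from the recurrence we must know that it cannot vanish; controlling invertibility of leading coefficients is exactly what rules this out, and it is where the hypothesis that each $M_j$ is invertible gets used.

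For the base case $j=1$, the initial data $\Phi_0=0$, $\Phi_0'=\bo$ together with the convention $M_0=0$ give, through \eqref{eq:prime j+1} and \eqref{eq:j+1}, $\Phi_1'=\Phi_0'+\lambda M_0\Phi_0=\bo$ and $\Phi_1=\Phi_0+\Phi_1'\,l_0=l_0\bo$. Both are constant in $\lambda$, hence of degree $0=j-1$, and their leading coefficients $\bo$ and $l_0\bo$ are invertible since the interval lengths satisfy $l_j=x_{j+1}-x_j>0$ (recall $x_0=-1<x_1<\cdots<x_N<1=x_{N+1}$).

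For the inductive step, assume the claim for some $j$ with $1\le j\le N$ and write $\Phi_j=\lambda^{j-1}P+(\text{lower order in }\lambda)$ with $P\in\Cl(W)$ invertible. From \eqref{eq:prime j+1}, $\Phi_{j+1}'=\Phi_j'+\lambda M_j\Phi_j$: the summand $\Phi_j'$ has degree $\le j-1$, while $\lambda M_j\Phi_j$ has top term $\lambda^{j}M_jP$. Since $M_j$ is invertible, $M_jP$ is invertible, in particular nonzero, so no cancellation occurs and $\deg\Phi_{j+1}'=j$ with invertible leading coefficient $M_jP$. Then from \eqref{eq:j+1}, $\Phi_{j+1}=\Phi_j+l_j\,\Phi_{j+1}'$; as $\deg\Phi_j\le j-1<j$ and $l_j\ne 0$, the leading term is $l_j\lambda^{j}M_jP$, again with invertible coefficient $l_jM_jP$. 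This closes the induction and yields $\deg\Phi_j=\deg\Phi_j'=j-1$ for all $1\le j\le N+1$.

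The only non-routine point is the cancellation issue flagged above, i.e.\ establishing that the candidate leading term of $\Phi_{j+1}'$ is genuinely nonzero; this is handled by the invertibility of $M_j$ together with the positivity of $l_j$, which is precisely why the inductive hypothesis must record invertibility of the leading coefficients and not merely the degrees. An equivalent, slightly more mechanical route would be to iterate the transfer matrices in \eqref{eq:Matrecurrence} and track $\lambda$-degrees entrywise, but it amounts to the same computation and I would prefer the direct induction above.
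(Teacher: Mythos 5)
Your proof is correct and follows the same route the paper intends: the paper's proof is the single line ``This follows directly from the recurrence relations,'' and your induction on \eqref{eq:j+1}--\eqref{eq:prime j+1} is exactly that argument written out in full. Your strengthened hypothesis tracking invertibility of the leading coefficients is a worthwhile addition, since it makes explicit why no cancellation by zero divisors can occur in $\Cl(W)$ --- a point the paper leaves implicit but which does rely on the stated assumption that each $M_j$ is invertible.
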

\begin{proof}
This follows directly from the recurrence relations.  
\end{proof}
\begin{proposition}\label{prop:inv}
For all $j\geq 1$, $\Phi_j$ and $\Phi'_j$ are invertible for sufficiently large
$\lambda$.  
\end{proposition}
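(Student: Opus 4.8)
The plan is to read off the leading $\lambda$-coefficients of $\Phi_j$ and $\Phi_j'$ from the recurrence \eqref{eq:Matrecurrence}, show that they are invertible elements of $\Cl(W)$, and then invoke the elementary fact that a polynomial in $\lambda$ with coefficients in a finite-dimensional algebra is invertible for $\abs{\lambda}$ large as soon as its leading coefficient is invertible.

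First I would note the base case: from \eqref{eq:Matrecurrence} with $\Phi_0=0$, $\Phi_0'=\bo$, $M_0=0$ one gets $\Phi_1=l_0\bo$ and $\Phi_1'=\bo$, which are constant and invertible since $l_0=x_1-x_0>0$. Then I would write $\Phi_j=\lambda^{j-1}P_j+(\text{lower order in }\lambda)$ and $\Phi_j'=\lambda^{j-1}Q_j+(\text{lower order})$, where by \autoref{prop:degrees} these are genuinely the leading terms. Comparing the coefficients of $\lambda^{j}$ in $\Phi_{j+1}'=\Phi_j'+\lambda M_j\Phi_j$ and in $\Phi_{j+1}=\Phi_j+\Phi_{j+1}'l_j$ — using $\deg\Phi_j=\deg\Phi_j'=j-1<j$ — gives $Q_{j+1}=M_jP_j$ and $P_{j+1}=l_jQ_{j+1}$. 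An induction then shows $P_j,Q_j\in\Cl(W)^{\times}$: if $P_j$ is invertible then so is $Q_{j+1}=M_jP_j$, because $M_j$ is invertible by hypothesis, and hence so is $P_{j+1}=l_jQ_{j+1}$ (with $l_j>0$). Unwinding the recursion gives the closed forms $P_j=(l_0l_1\cdots l_{j-1})\,M_{j-1}M_{j-2}\cdots M_1$ and $Q_j=(l_0\cdots l_{j-2})\,M_{j-1}\cdots M_1$, with empty products read as $1$, which are manifestly invertible.

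It then remains to pass from an invertible leading coefficient to invertibility of the full polynomial. Fix a submultiplicative norm on the finite-dimensional algebra $\Cl(W)$ — or on its complexification, as $\lambda$ runs over $\C$ — for instance the operator norm coming from left multiplication. For $X(\lambda)=\lambda^kX_k+\cdots+X_0$ with $X_k$ invertible I would factor $X(\lambda)=\lambda^kX_k\bigl(1+\lambda^{-1}X_k^{-1}X_{k-1}+\cdots+\lambda^{-k}X_k^{-1}X_0\bigr)$; the bracketed factor differs from $1$ by an element of norm $O(\abs{\lambda}^{-1})$, hence is invertible for $\abs{\lambda}$ large by the Neumann series, and so is $X(\lambda)$. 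Applying this to $X=\Phi_j$ and $X=\Phi_j'$ for each of the finitely many $j\le N+1$ and taking the largest threshold produces a single $\Lambda$ beyond which every $\Phi_j$ and $\Phi_j'$ is invertible. The main thing to be careful about is that $\Cl(W)$ is in general not a division algebra, so this last step cannot be skipped — equivalently one can argue that $\lambda\mapsto\det$ of left multiplication by $\Phi_j(\lambda)$ is a polynomial that is not identically zero. One also has to be sure the leading terms computed above do not accidentally cancel, but that is exactly what the inductive invertibility of $P_j$ provides.
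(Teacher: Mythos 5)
Your proposal is correct and takes essentially the same route as the paper: the paper's own proof is just the one-line observation that the case $j=1$ is invertible and that the matrix recurrence \eqref{eq:Matrecurrence} yields the result by induction. Your write-up supplies the details the paper leaves implicit --- in particular that the induction must be carried on the leading $\lambda$-coefficients $P_j=(l_0\cdots l_{j-1})M_{j-1}\cdots M_1$ and $Q_j=(l_0\cdots l_{j-2})M_{j-1}\cdots M_1$ (invertible as products of invertible factors), followed by the Neumann-series step, since $\Cl(W)$ is not a division algebra and a naive induction on invertibility of the full polynomials would founder on the fact that a sum of invertible elements need not be invertible.
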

\begin{proof}
The case $j=1$ is invertible, and the matrix recurrence relation \eqref{eq:Matrecurrence}  yields the proof by induction.  

\end{proof}
We can now summarize the recursive way of solving 
the initial value problem \eqref{eq:DIVP}.  
\begin{proposition}

\begin{equation}
\begin{bmatrix}\Phi_{j+1}\\ \Phi'_{j+1}\end{bmatrix}=
\left( \prod_{k=0}^{j}\begin{bmatrix}1+\lambda l_kM_k&l_k\\ \lambda M_k&1 \end{bmatrix}\right)\begin{bmatrix}0\\ 1\end{bmatrix}, \qquad \text{ for } 0\leq j\leq N.  
\end{equation}
\end{proposition}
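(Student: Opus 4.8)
The plan is a routine induction on $j$; the one point that genuinely needs care is the ordering of the matrix product, since the transfer matrices have entries in the non-commutative Clifford algebra. Write $T_k := \begin{bmatrix}1+\lambda l_k M_k & l_k\\ \lambda M_k & 1\end{bmatrix}$ for $0\le k\le N$, so that \eqref{eq:Matrecurrence} reads $\begin{bmatrix}\Phi_{k+1}\\ \Phi'_{k+1}\end{bmatrix}=T_k\begin{bmatrix}\Phi_{k}\\ \Phi'_{k}\end{bmatrix}$. I would first fix the convention that $\prod_{k=0}^{j}T_k$ denotes the ordered product $T_j T_{j-1}\cdots T_1 T_0$, with the highest index on the left; this is the only ordering consistent with \eqref{eq:Matrecurrence}, which advances the index $k\mapsto k+1$ by \emph{left} multiplication by $T_k$.

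For the base case $j=0$ I would simply invoke the stated initial data: the conventions $\Phi_0=0$, $\Phi'_0=\bo$ mean that the seed column $\begin{bmatrix}0\\ 1\end{bmatrix}$ is exactly $\begin{bmatrix}\Phi_0\\ \Phi'_0\end{bmatrix}$, and (using $M_0=0$, though this is not even needed here) \eqref{eq:Matrecurrence} with $k=0$ gives $\begin{bmatrix}\Phi_1\\ \Phi'_1\end{bmatrix}=T_0\begin{bmatrix}\Phi_0\\ \Phi'_0\end{bmatrix}=T_0\begin{bmatrix}0\\ 1\end{bmatrix}$, which is the asserted formula at $j=0$.

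For the inductive step, assuming the identity holds at $j-1$ for some $1\le j\le N$, i.e. $\begin{bmatrix}\Phi_{j}\\ \Phi'_{j}\end{bmatrix}=\bigl(\prod_{k=0}^{j-1}T_k\bigr)\begin{bmatrix}0\\ 1\end{bmatrix}$, I would left-multiply by $T_j$ and apply \eqref{eq:Matrecurrence} once more to get $\begin{bmatrix}\Phi_{j+1}\\ \Phi'_{j+1}\end{bmatrix}=T_j\begin{bmatrix}\Phi_{j}\\ \Phi'_{j}\end{bmatrix}=T_j\bigl(\prod_{k=0}^{j-1}T_k\bigr)\begin{bmatrix}0\\ 1\end{bmatrix}=\bigl(\prod_{k=0}^{j}T_k\bigr)\begin{bmatrix}0\\ 1\end{bmatrix}$, closing the induction at $j\le N$.

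There is no real obstacle here: the statement is just the assertion that iterating the one-step recurrence \eqref{eq:Matrecurrence} $j+1$ times from the initial column reproduces $\begin{bmatrix}\Phi_{j+1}\\ \Phi'_{j+1}\end{bmatrix}$, so the only thing to be vigilant about is not accidentally reversing the factor order when writing $\prod_{k=0}^{j}$, which would be wrong because the $M_k$ do not commute. (One may alternatively phrase the whole argument as "apply $\begin{bmatrix}1&l_j\\0&1\end{bmatrix}$ then $\begin{bmatrix}1&0\\ \lambda M_j&1\end{bmatrix}$ at each site", matching the factorization already displayed in \eqref{eq:Matrecurrence}, but collapsing the two factors into $T_j$ keeps the bookkeeping shortest.)
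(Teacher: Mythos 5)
Your proof is correct and is exactly the argument the paper intends: the proposition is stated without proof precisely because it is the immediate iteration of the recurrence \eqref{eq:Matrecurrence} from the seed $\Phi_0=0$, $\Phi'_0=\bo$, which is what your induction makes explicit. Your care about the ordering convention (highest index on the left) is the right point to flag in the noncommutative setting, and your base case correctly reproduces $\Phi_1=l_0$, $\Phi'_1=1$ as used later in the proof of \autoref{thm:SCfrac}.
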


\begin{theorem}[Continued fraction expansion; Stieltjes meets Clifford] \label{thm:SCfrac}
Let 
\begin{equation} 
W(\lambda)= \frac{1}{\l} \Phi_{N+1}'(1,\l)(\Phi_{N+1}(1,\l))^{-1}. 
\end{equation}
Then, for sufficiently large $\lambda$, $W(\lambda)$ has 
a Stieltjes continued fraction decomposition with 
Clifford valued coefficients
\begin{equation} \label{eq:SCfrac}
W(\lambda)=\cfrac{1}{\lambda l_N +\cfrac{1}{M_{N}+\cfrac{1}{\lambda l_{N-1}+\cfrac{1}{\ddots+\cfrac{1}{\lambda l_0}}}}}
\end{equation}
\end{theorem}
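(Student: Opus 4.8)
\emph{Proof plan.} The plan is to introduce the sequence of partial Weyl functions $W_j(\lambda):=\tfrac{1}{\lambda}\Phi_j'(\lambda)\Phi_j(\lambda)^{-1}$ for $1\le j\le N+1$, so that $W(\lambda)=W_{N+1}(\lambda)$, and to show that $\{W_j\}$ obeys a one-step recursion which, when unfolded, is exactly the continued fraction \eqref{eq:SCfrac}. By \autoref{prop:inv} every $\Phi_j,\Phi_j'$ is invertible for $\lambda$ large, so each $W_j$ is well defined there; all the algebra below is carried out for such $\lambda$, and since \eqref{eq:SCfrac} is a finite continued fraction there is no convergence to discuss, only the existence of the intermediate inverses, which the computation supplies automatically.

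First I would dispose of the base case. From $\Phi_0=0$, $\Phi_0'=\bo$, and the recurrences \eqref{eq:j+1}, \eqref{eq:prime j+1} (with $M_0=0$) one gets $\Phi_1'=\bo$ and $\Phi_1=l_0\bo$, hence $W_1(\lambda)=\tfrac{1}{\lambda l_0}\bo$, which is precisely the innermost level of \eqref{eq:SCfrac}; here one uses that the $l_j=x_{j+1}-x_j$ are positive, which is built into the discrete setup. Next comes the recursion, which is the computational core. Using $\Phi_{j+1}=\Phi_j+\Phi_{j+1}'l_j$ from \eqref{eq:j+1} and the centrality of the scalar $l_j$ in $\Cl(W)$,
$$
W_{j+1}^{-1}=\lambda\,\Phi_{j+1}(\Phi_{j+1}')^{-1}=\lambda l_j+\lambda\,\Phi_j(\Phi_{j+1}')^{-1};
$$
then, using $\Phi_{j+1}'=\Phi_j'+\lambda M_j\Phi_j$ from \eqref{eq:prime j+1},
$$
\bigl(\lambda\,\Phi_j(\Phi_{j+1}')^{-1}\bigr)^{-1}=\tfrac{1}{\lambda}\Phi_{j+1}'\Phi_j^{-1}=M_j+\tfrac{1}{\lambda}\Phi_j'\Phi_j^{-1}=M_j+W_j .
$$
The invertibility of the middle expression — hence of $M_j+W_j$ — is automatic, since $\Phi_{j+1}'$ and $\Phi_j$ are invertible by \autoref{prop:inv}; no positivity of $M_j$ and no asymptotic estimate is needed. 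Combining the two displays yields
$$
W_{j+1}(\lambda)=\cfrac{1}{\lambda l_j+\cfrac{1}{M_j+W_j(\lambda)}},\qquad 1\le j\le N .
$$

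Finally I would iterate this recursion from $j=1$ up to $j=N$, starting at $W_1=\tfrac{1}{\lambda l_0}\bo$: each step adds the two new partial denominators $\lambda l_j$ and $M_j$ on top of the current fraction, and after $N$ steps one reads off exactly \eqref{eq:SCfrac} for $W=W_{N+1}$. I expect the only real obstacle to be the bookkeeping in the non-commutative algebra: one must consistently use the right quotient $\Phi_j'\Phi_j^{-1}$ (never $\Phi_j^{-1}\Phi_j'$), keep track of on which side each factor sits when inverting a product, and use that the only elements free to move past the others are the scalars $\lambda$ and $l_j$ — which is precisely what legitimizes "$\lambda l_j$" as a scalar partial denominator in \eqref{eq:SCfrac}. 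The degree count of \autoref{prop:degrees} is not logically required for the identity, but it is reassuring: it gives $W_j=O(1/\lambda)$, so $M_j+W_j\to M_j$ as $\lambda\to\infty$, consistent with the $M_j$ in \eqref{eq:SCfrac} being the large-$\lambda$ limits of the corresponding blocks.
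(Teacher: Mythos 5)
Your proposal is correct and follows essentially the same route as the paper: the same one-step algebraic manipulation of the recurrences \eqref{eq:j+1} and \eqref{eq:prime j+1} (relying on \autoref{prop:inv} for the invertibility of the intermediate quotients), with the only cosmetic difference that you unfold the recursion bottom-up from $W_1=\tfrac{1}{\lambda l_0}$ while the paper peels it top-down from $W_{N+1}$.
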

\begin{proof}
We note that by \autoref{prop:inv} all $\Phi_j$ and $\Phi'_j$ are invertible for sufficiently large $\lambda$.  By the last, $j=N$, recurrence relation (see \eqref{eq:j+1} and 
\eqref{eq:prime j+1}): 
\begin{equation*} 
\l \Phi_{N+1}(\Phi_{N+1}')^{-1}=\l \Phi_N(\Phi_{N+1}')^{-1}+\l l_N
\end{equation*}
and
\begin{equation*} 
\l^{-1} \Phi_{N+1}' (\Phi_{N+1})^{-1}=(\l l_N+(\frac1\lambda \Phi_N'(\Phi_N)^{-1}+M_N))^{-1}. 
\end{equation*} 

This is the first step in the iteration; we moved from $\frac1\l \Phi_{N+1}'(\Phi_{N+1})^{-1}$ to $\frac1\lambda \Phi_{N}'(\Phi_{N})^{-1}$.  We proceed by iterating down until we reach $\frac1\l \Phi_1'(\Phi_{1})^{-1}=\frac{1}{\l l_0}$, since
$\Phi_1=l_0$ and $\Phi'_1=1$ for the Dirichlet problem.   This concludes the proof.  
\end{proof}

\subsection{Generalizing the Matrix String equation to the infinite interval} 
We will briefly discuss the Clifford generalization from the previous section, adapted to the infinite interval $(-\infty, \infty)$.  These two pictures are connected via a Liouville transformation, which was initially introduced in 
the context of the CH equation in \cite{beals-sattinger-szmigielski:1998:acoustic-scattering-KdV-hierarchy}.  We refer to that paper for details regarding the transformation.  
These two pictures are equivalent (under suitable boundary conditions). Still, it makes sense to consider them separately because the generalization to the infinite interval is more directly connected to the CH equation.  \begin{definition} \label{def:alphaMstring}
Consider a compactly supported measure $M\in \Cl_1(V)$.  
The boundary value problem 
\begin{align*} 
&D_x^2 \Phi=(\alpha+\lambda M) \Phi, \qquad   -\infty<x<\infty, \quad \alpha >0,  \\
\Phi(x, \lambda) \rightarrow 0, \, \text{ as } x\rightarrow -\infty &,\, \text{ and $\Phi(x,\l)$ is non-invertible as } x\rightarrow \infty, 
\end{align*} 
will be called the \emph{alpha Matrix String}.  
  
\end{definition} 
We then consider isospectral deformations of the same type as in \eqref{eq:t-flow-differential form}, respecting the boundary conditions spelled out in 
\autoref{def:alphaMstring}.  
\begin{definition}\label{def:alphaMstring-Deformation}  
\begin{equation}\label{eq:alpha-t-flow-differential form} 
D_t \Phi=(\frac{a-b_x}{2} +b\, D_x)\Phi, 
\end{equation}
where $a,b\in \Cl(V) $ are functions of $x$, a deformation parameter $t$,  and the spectral variable $\lambda$, which are bounded as $\abs{x} \rightarrow \infty$.    
\end{definition} 
The alpha version of the compatibility condition stated in \autoref{lem:ZCC} is straightforward
\begin{lemma}[Symmetric Form of compatibility conditions; the alpha case]\label{lem:alphaZCC}
\begin{subequations}
\begin{align}
\lambda M_t&=-\frac{b_{xxx}}{2}+2\alpha b_x+\frac{\lambda}{2}\big(\InMf b+b\InMb+
[a,M]\big), \label{eq:alpha-t-flow}\\
0&=a_x+\lambda[b,M]. \label{eq:alpha-constraints} 
\end{align}
\end{subequations}
where $\InMf=D_x M+MD_x$ is acting to the right and $\InMb=D_x M+MD_x$ is acting to the left.  
\end{lemma}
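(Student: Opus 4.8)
The plan is to mirror, with the scalar $\alpha$ carried along, the derivation of \autoref{lem:ZCC}: the two linear problems $D_x^2\Phi=(\alpha+\lambda M)\Phi$ and $D_t\Phi=P\Phi$ with $P=\frac{a-b_x}{2}+bD_x$ are compatible precisely when the mixed partials agree, $D_t(D_x^2\Phi)=D_x^2(D_t\Phi)$, and expanding both sides and matching coefficients produces \eqref{eq:alpha-t-flow}--\eqref{eq:alpha-constraints}. On the left, $D_t(D_x^2\Phi)=D_t\big((\alpha+\lambda M)\Phi\big)=\lambda M_t\Phi+(\alpha+\lambda M)P\Phi$. On the right I would differentiate $P\Phi=\frac{a-b_x}{2}\Phi+b\Phi_x$ twice and, each time a $\Phi_{xx}$ appears, replace it using $\Phi_{xx}=(\alpha+\lambda M)\Phi$; this reduces $D_x^2(P\Phi)$ to an expression of the shape $X\Phi+Y\Phi_x$ with $X,Y\in\Cl(V)$ built from $a,b,M$ and their $x$-derivatives, all multiplications acting on the left of $\Phi$ since the algebra is noncommutative. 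Concretely one finds $Y=a_x+\alpha b+\lambda bM$ and $X=\frac{a_{xx}-b_{xxx}}{2}+\frac{(a+b_x)(\alpha+\lambda M)}{2}+\alpha b_x+\lambda(bM)_x$.

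The second step is to separate the two resulting equations. As in the proof of \autoref{lem:ZCC}, one may read the compatibility identity as an equality of differential operators applied to the partial fundamental matrix $\Phi$ (for which $\Phi$ and $\Phi_x$ are pointwise invertible for $\lambda$ large, cf.\ the mechanism behind \autoref{prop:inv}), so the $\Phi$- and $\Phi_x$-coefficients of the two sides may be matched separately. Matching the coefficient of $\Phi_x$ gives $(\alpha+\lambda M)b=a_x+\alpha b+\lambda bM$; here the central term $\alpha b$ cancels on both sides, leaving $a_x=\lambda(Mb-bM)=-\lambda[b,M]$, which is \eqref{eq:alpha-constraints}. Matching the coefficient of $\Phi$ gives the preliminary form $\lambda M_t=\frac{a_{xx}-b_{xxx}}{2}+\frac{(a+b_x)(\alpha+\lambda M)-(\alpha+\lambda M)(a-b_x)}{2}+\alpha b_x+\lambda(bM)_x$, in which the only contributions new relative to the $\alpha=0$ case are central multiples of $\alpha$.

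The last step is cosmetic rearrangement. I would expand the symmetrized bracket into $\alpha b_x+\frac{\lambda}{2}[a,M]+\frac{\lambda}{2}(Mb_x+b_xM)$, then substitute $a_{xx}=\partial_x(-\lambda[b,M])=\lambda(Mb-bM)_x$ from the constraint just obtained, and collect terms: the $\alpha$-pieces add up to $2\alpha b_x$, while the remaining $\lambda$-terms reorganize, using $(Mb+bM)_x+Mb_x+b_xM=\InMf b+b\InMb$ as in the Remark after \autoref{lem:ZCC}, into $\frac{\lambda}{2}(\InMf b+b\InMb)$. This yields \eqref{eq:alpha-t-flow}. There is no genuine obstacle here — this is exactly the computation behind \autoref{lem:ZCC} with a central parameter $\alpha$ inserted — so the only care needed is in the left-multiplication ordering and in the bookkeeping of the $\Phi_{xx}$-substitutions; the word \emph{straightforward} in the text is justified because $\alpha$ commutes through everything and contributes only the extra $2\alpha b_x$.
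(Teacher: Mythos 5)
Your computation is correct and is exactly the compatibility (zero-curvature) calculation the paper has in mind when it calls the alpha case ``straightforward'': the paper gives no written proof, and your expansion of $D_t(D_x^2\Phi)=D_x^2(D_t\Phi)$, the matching of the $\Phi$- and $\Phi_x$-coefficients, the substitution $a_{xx}=\lambda(Mb-bM)_x$, and the identity $\InMf b+b\InMb=(Mb+bM)_x+Mb_x+b_xM$ from the Remark all reproduce \eqref{eq:alpha-t-flow}--\eqref{eq:alpha-constraints}, with the central $\alpha$ contributing only the extra $2\alpha b_x$. The only cosmetic caveat is that the separation of the $\Phi$ and $\Phi_x$ coefficients is more cleanly justified by requiring the identity to hold for a full fundamental system of solutions (so that the pair $(\Phi,\Phi_x)$ carries invertible Wronskian-type data) rather than by appealing to \autoref{prop:inv}, which concerns the discrete case.
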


\begin{theorem}[Clifford Matrix String on the infinite interval]
\mbox{}\\
Let us consider the boundary value problem of \autoref{def:alphaMstring} and 
the family of isospectral deformations specified in \autoref{def:alphaMstring-Deformation}.  
 Then, there exists a solution to equations \eqref{eq:alpha-t-flow} and 
\eqref{eq:alpha-constraints}, of the form  
\begin{equation} \label{eq:alphaEB}
b(x,t; \lambda)=b_0(x,t)+\frac{b_{-1}(x,t)}{\lambda}, \quad a(x,t; \lambda)=v(x,t), 
\end{equation} 
where 
\begin{align}
&b_{-1}(x,t)=e_1, \qquad b_0(x,t)=u(x,t),\\
&      a(x,t)=e_1 v(x,t), \hspace{3cm}
\end{align}
where $v(x,t)\in \Cl_1(V)$, and 
\begin{equation} \label{eq:vx,uxxx line}
 \quad v_x=-2P_V M, \quad u_{xxx}-4\alpha u_x=2m_{1, x}\epsilon_1, 
\end{equation} 
then $M(x,t)$ evolves according to 
\begin{equation} \label{eq:Cliff line}
M_t=\InMf u(x,t)+(M,v)e_1-(M, e_1)v, 
\end{equation} 
or, in vector form, 
\begin{equation} \label{eq:VCliff line}
\vec{M}_t=\mathcal{L}_{\vec{M}} u+AS\vec{M}, 
\end{equation} 
where $A=\begin{bmatrix} 0&\vec{v}\\-\vec{v}^T&0 \end{bmatrix}$ and 
\begin{equation} \label{eq:Vconstraints line} 
\quad \vec{v}_x=-2P_V \vec{M}, \quad u_{xxx}-4\alpha u_x=2m_{1, x}\epsilon_1. 
\end{equation}
\end{theorem}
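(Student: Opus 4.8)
The plan is to follow, almost verbatim, the computation already carried out for the finite-interval Clifford string, the only genuinely new inputs being the extra $2\alpha b_x$ term in \autoref{lem:alphaZCC} and the replacement of the Dirichlet conditions at $\pm1$ by the behaviour at $\pm\infty$ prescribed in \autoref{def:alphaMstring}. First I would substitute the ansatz \eqref{eq:alphaEB} — that is, $b=b_0+b_{-1}/\lambda$ with $b_0=u$ scalar, $b_{-1}=e_1$, and $a=e_1v$ with $v\in\Cl_1(V)$ — into \eqref{eq:alpha-t-flow} and \eqref{eq:alpha-constraints}, and separate powers of $\lambda$. From \eqref{eq:alpha-constraints} one reads off $[b_0,M]=0$ at order $\lambda^1$ and $a_x=[M,b_{-1}]$ at order $\lambda^0$. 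From \eqref{eq:alpha-t-flow} one reads off $M_t=\tfrac12\bigl(\InMf b_0+b_0\InMb+[a,M]\bigr)$ at order $\lambda^1$, $b_{0,xxx}-4\alpha b_{0,x}=\InMf b_{-1}+b_{-1}\InMb$ at order $\lambda^0$, and $b_{-1,xxx}-4\alpha b_{-1,x}=0$ at order $\lambda^{-1}$; the two $\alpha$-terms are precisely the contribution of $2\alpha b_x$, and this is the sole way in which the bookkeeping differs from the finite-interval case.

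Now the choices $b_0=u$, $b_{-1}=e_1$ dispose of the algebraic and differential constraints at once: $[u,M]=0$ because $u$ is central, $b_{-1,xxx}-4\alpha b_{-1,x}=0$ because $e_1$ is constant, and $\tfrac12(\InMf b_0+b_0\InMb)=\InMf u$ again by centrality of $u$ (apply the symmetric rewriting $\InMf b+b\InMb=(Mb+bM)_x+Mb_x+b_xM$ with $b=u$). It then remains to evaluate three Clifford expressions exactly as in the earlier proofs. Since $e_ke_1=-e_1e_k$ for $k\ge2$, one has $[M,e_1]=\sum_{k\ge2}m_k[e_k,e_1]=-2e_1\sum_{k\ge2}m_ke_k=-2e_1P_VM$; using $a=e_1v$ and $e_1^2=\epsilon_1$, the constraint $a_x=[M,e_1]$ becomes $v_x=-2P_VM$. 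Next, $\InMf e_1+e_1\InMb=(Me_1+e_1M)_x=2(M,e_1)_x=2m_{1,x}\epsilon_1$, because $Me_1+e_1M=2B(M,e_1)=2m_1\epsilon_1$, so the order-$\lambda^0$ flow equation is $u_{xxx}-4\alpha u_x=2m_{1,x}\epsilon_1$; together these are \eqref{eq:vx,uxxx line}.

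Finally, using the identity $[e_\mu e_\nu,e_\rho]=2\delta_{\nu\rho}\epsilon_\rho e_\mu-2\delta_{\mu\rho}\epsilon_\rho e_\nu$ already recorded in the excerpt, one expands $[e_1v,M]=\sum_{k\ge2,j}v_km_j[e_1e_k,e_j]=2(M,v)e_1-2(M,e_1)v$, so the order-$\lambda^1$ flow equation reads $M_t=\InMf u+(M,v)e_1-(M,e_1)v$, which is \eqref{eq:Cliff line}. Expanding \eqref{eq:Cliff line} in the orthonormal basis and collecting components then gives the vector form \eqref{eq:VCliff line} with $A=\begin{bmatrix}0&\vec v\\-\vec v^T&0\end{bmatrix}$, and \eqref{eq:vx,uxxx line} turns into \eqref{eq:Vconstraints line}, just as in the finite-interval statement.

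The one point that genuinely uses the standing hypotheses $\alpha>0$ and $M$ compactly supported — and that takes over the role played by the factor $\beta(x)=1-x^2$ in the finite-interval theorems — is the verification that the fields produced above are bounded on $\R$, as \autoref{def:alphaMstring-Deformation} demands. For $v$ this is immediate: $v_x=-2P_VM$ is compactly supported, so $v$ is constant off the support of $M$, and one fixes the additive constant by the natural normalization. For $u$ one integrates $u_{xxx}-4\alpha u_x=2m_{1,x}\epsilon_1$ once to $u_{xx}-4\alpha u=2m_1\epsilon_1$ — the constant of integration being pinned by requiring $u\to0$ at $\pm\infty$, which is consistent because $D_x^3-4\alpha D_x$ has only the constants as bounded homogeneous solutions — and then represents $u$ as the convolution of the compactly supported right-hand side with the Green's function $-\tfrac1{4\sqrt\alpha}e^{-2\sqrt\alpha|x|}$ of $D_x^2-4\alpha$, which for $\alpha=1$ is the Liouville image of the classical Camassa–Holm representation $u=\tfrac12e^{-|x|}*m$. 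I expect this last step — the bookkeeping of the integration constants and the check that the normalizations of $u$ and $v$ are compatible with the asymptotics of $\Phi$ in \autoref{def:alphaMstring} — to be the only delicate part of the argument; the Clifford algebra leading to \eqref{eq:Cliff line}, \eqref{eq:VCliff line} and \eqref{eq:Vconstraints line} is identical to manipulations already displayed in the excerpt.
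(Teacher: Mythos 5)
Your proposal is correct and follows essentially the route the paper intends: the paper omits a separate proof of this theorem precisely because it is the finite-interval argument repeated with the extra $2\alpha b_x$ term shifting $b_{0,xxx}\mapsto b_{0,xxx}-4\alpha b_{0,x}$ and with $\beta(x)$ replaced by the constant $e_1$, and your Clifford computations of $[M,e_1]$, $\InMf e_1+e_1\InMb$, and $\frac12[e_1v,M]$ reproduce exactly the identities used in the finite-interval proof. Your closing discussion of boundedness, integration constants, and the Green's function of $D_x^2-4\alpha$ matches what the paper carries out later in \eqref{eq:alphaconstraintsBCH}--\eqref{eq:uvassD}.
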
 
\begin{example}
We consider $V=\R^2$.  The Clifford algebra $\Cl(V)$ is generated by $1, e_1, e_2$ where 
\begin{equation} 
e_1^2=\epsilon_1,\qquad e_2^2=\epsilon_2, \qquad \text{ and } e_1e_2+e_2e_1=0.  
\end{equation}
The vector $\vec{v}$ has, in this case, one component.  Let us call it $v$.  Then, the evolution equations \eqref{eq:VCliff line} take the form
\begin{align}
&m_{1,t}=(m_1u)_x+u_xm_1+v\epsilon_2 m_2, \\
&m_{2,t}=(m_2u)_x+u_xm_2-v\epsilon_1 m_1,  
\end{align}
while the constraints \eqref{eq:Vconstraints line} are given by
\begin{equation} 
v_x=-2 m_2, \quad u_{xxx}-4\alpha u_x=2m_{1,x} \epsilon_1.  
\end{equation} 
Consider now the special choice $\epsilon_1=1, \epsilon_2=-1, \alpha=1$ and set $m=m_1+m_2, \, n=m_1-m_2$.  
Then 
\begin{align}
&m_t=\mathcal{L}_m u -v m, &n_t=\mathcal{L}_n u+vn, \\
&u_{xxx}-4 u_x=\mathcal{L}_{m+n} 1=(m+n)_x, &v_x=(n-m),   
\end{align}
in full agreement with the 2-component CH equation proposed in \cite{beals-szmigielski:2021p:2CH-euler-bernoulli-beam-noncommutative-continued-fractions}.  
\end{example} 

\section{Preliminaries about asymptotics for the alpha Matrix String with $\alpha=1$} 
We will set $\alpha=1$ for the rest of the paper.  
Since $M$ has compact support and $\Phi \rightarrow 0, \text{ as } x\rightarrow -\infty$, 
\begin{equation} \label{eq:Phiminus} 
\Phi(x,\lambda)=\mu e^x, \qquad x<< 0, \qquad \mu \in \Cl(W), 
\end{equation} 
where  $\mu \in \Cl(W)$, and $\mu$ can depend on the deformation parameter $t$.  We will discuss the significance of $\mu$ in the next section.  
For large $x$ 
\begin{equation} \label{eq:Phiplus}
\Phi(x, \lambda)=e^{x} A(\lambda)+e^{-x} B(\lambda), \qquad x>> 0, 
\end{equation} 
  holds, and, on the spectrum, $A(\lambda_\nu)$ is not invertible in $\Cl (W)$.  
Recall that the Lax pair for the alpha Matrix String, with $\alpha=1$,  is: 
\begin{subequations} 
\begin{align} 
\Phi_{xx}&=(1 +\lambda M) \Phi, \qquad M \in \Cl_1(W), \label{eq:x-Lax-DS} \\
\Phi_t&=\frac{a-b_x}{2}\Phi+b\Phi_x,  \label{eq:t-Lax-DS}
\end{align}
\end{subequations} 
where $\Phi, a$ and $b$ are functions of $x,t, \lambda$  with values in the Clifford algebra $\Cl(W)$.  In this work, we only consider the simplest deformation in which $a$ and $b$ have terms 
of degree $0, -1$ in $\lambda$.  As shown in the previous section, if we add the 
assumption that $a$ and $b$ are bounded  as $x\to\pm\infty$ and $C=e_1$, then,
up to a normalization, they take the form
\begin{align} \label{eq:abalphaMS}
&b(x,t,\l)=u(x,t)+\frac {e_1}{\l} ,\qquad u(x,t)\in \Cl_0, \\
\qquad 
&a(x,t)=e_1 v(x,t), \qquad v(x,t) \in \Cl_1(V).  
\end{align} 
The compatibility conditions split into two sets of constraints 
\begin{equation*} 
(u_{xx}-4 u)_x=2(M,e_1)_x,  \qquad v_x=-2P_V M, 
\end{equation*} 
and a system of evolution equations
\begin{equation} \label{eq:evolalphaMS} 
M_t=\mathcal{L}_M u+(v,M)e_1-(M,e_1) v. 
\end{equation} 
Furthermore, if we are interested in a compactly supported $M$ and $u$ vanishing at $\pm \infty$,  then we 
can replace one of the constraints with a more restrictive one,  while keeping the other constraint intact:  
\begin{equation} \label{eq:alphaconstraintsBCH}
u_{xx}-4 u=2(M,e_1) , \qquad v_{x}=-2P_V M.
\end{equation} 
Since $M$ has compact support,  \autoref{eq:alphaconstraintsBCH} can be solved explicitly to obtain 
\begin{align}
u(x,t)&=-\frac{1}{2} \int_{-\infty}^\infty e^{-2\abs{x-y}}(M(y,t), e_1) \, dy, \label{eq:uD} \\
v(x,t)&= -\int_{-\infty}^\infty \sgn (x-y) P_V M(y,t) \, dy \label{eq:vD}, 
\end{align} 
with the understanding that $v$ is determined only up to a constant element in $V$, giving the following asymptotic (in $x$) behaviour of $u$ and $v$: 
\begin{equation} \label{eq:uvassD}
\begin{matrix}\lim_{x\rightarrow \pm \infty} u(x,t)=0=\lim_{x\rightarrow \pm \infty}u_x(x,t), \\ 
\lim_{x\rightarrow \pm \infty}v(x,t)=\mp  \int_{-\infty}^\infty P_V M(y,t) dy\stackrel{\text{def}}{=}v_{\pm}(t).   
\end{matrix}\end{equation}

\begin{proposition} \label{prop:m1CL}
Suppose $M$ is compactly supported, $u$ is assumed to vanish at $\pm \infty$, and $v$ is chosen as in \autoref{eq:vD}.   Then the integral 
\begin{equation} \label{eq:MassConservation}
\int_{-\infty} ^\infty (M(x,t), e_1)\, dx
\end{equation} 
is independent of $t$.  
\end{proposition}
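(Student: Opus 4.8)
The plan is to differentiate under the integral sign — legitimate because $M$ stays compactly supported in $x$ for every $t$ — so the claim reduces to
\[
\frac{d}{dt}\int_{-\infty}^\infty (M,e_1)\,dx=\int_{-\infty}^\infty (M_t,e_1)\,dx=0 .
\]
I would pair the evolution equation \eqref{eq:evolalphaMS} with $e_1$. Since $v\in\Cl_1(V)$ is orthogonal to $e_1$, we have $(v,e_1)=0$, so the term $(M,e_1)v$ contributes nothing to the pairing, while $(e_1,e_1)=\epsilon_1$; hence
\[
(M_t,e_1)=(\mathcal{L}_M u,e_1)+\epsilon_1\,(v,M),\qquad \mathcal{L}_M u=(Mu)_x+Mu_x .
\]
It then suffices to show that $\int(\mathcal{L}_M u,e_1)\,dx=0$ and $\int (v,M)\,dx=0$ separately.

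For the $(v,M)$-term: because $v\in V$, $(v,M)=(v,P_V M)$, and the constraint $v_x=-2P_V M$ in \eqref{eq:alphaconstraintsBCH} turns this into a total derivative, $(v,M)=-\tfrac12(v,v_x)=-\tfrac14\,\partial_x(v,v)$. Thus $\int_{-\infty}^\infty(v,M)\,dx=-\tfrac14\big((v_+,v_+)-(v_-,v_-)\big)$, with $v_\pm(t)$ the limiting values in \eqref{eq:uvassD}; since those satisfy $v_+(t)=-v_-(t)$, we get $(v_+,v_+)=(v_-,v_-)$ and this contribution cancels. For the $(\mathcal{L}_M u,e_1)$-term: with $f=(M,e_1)$ and $u$ scalar, $(\mathcal{L}_M u,e_1)=\partial_x\!\big(uf\big)+u_xf$. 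The total-derivative piece integrates to $\big[uf\big]_{-\infty}^\infty=0$ since $u\to0$ at $\pm\infty$ (see \eqref{eq:uvassD}) and $f$ is compactly supported, leaving $\int u_xf\,dx$. Substituting $f=\tfrac12(u_{xx}-4u)$ from \eqref{eq:alphaconstraintsBCH} and integrating by parts once gives $\int u_xf\,dx=\tfrac14\big[u_x^2\big]_{-\infty}^\infty-\big[u^2\big]_{-\infty}^\infty=0$, again by $u,u_x\to0$ at $\pm\infty$. Adding the two vanishing pieces yields the proposition.

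I do not anticipate a real obstacle: this is the Clifford-algebra counterpart of the classical conservation of $\int m\,dx$ for Camassa–Holm (and in fact $\int(M,e_1)\,dx=-2\int u\,dx$ once \eqref{eq:alphaconstraintsBCH} is used), so the only genuinely new point is that the boundary term from the $(v,M)$ piece is killed not by decay of $v$ — which fails, since $v$ tends to the nonzero constants $v_\pm$ — but by the antisymmetry $v_+=-v_-$ of those constants. The rest is bookkeeping: justifying differentiation under the integral from the compact support of $M$, checking $(v,e_1)=0$ so that $(M,e_1)v$ drops out of the pairing, and confirming the $u$-boundary terms vanish via the explicit convolution \eqref{eq:uD}.
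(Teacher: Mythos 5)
Your proposal is correct and follows essentially the same route as the paper's own proof: pair \eqref{eq:evolalphaMS} with $e_1$, use the constraints \eqref{eq:alphaconstraintsBCH} to rewrite both $u_x(M,e_1)$ and $(v,M)$ as total derivatives, and kill the boundary contributions via $u,u_x\to 0$ and the cancellation $(v_+,v_+)=(v_-,v_-)$ coming from $v_+=-v_-$. The only (harmless) differences are bookkeeping: you carry the factor $\epsilon_1$ explicitly and spell out why the $(M,e_1)v$ term drops from the pairing, which the paper leaves implicit.
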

\begin{proof} 
Let $R$ be sufficiently large so that the interval $(-R,R)$ contains the support of $m_1$ (see \eqref{eq:assM}). Then, from 	\eqref{eq:evolalphaMS}, we obtain 
\begin{align*} 
D_t \int_{-\infty} ^\infty (M(x,t),e_1)\, dx=\int_{-\infty} ^\infty \mathcal{L}_{(M(x,t),e_1)}u(x, t)\, dx+
\int_{-R} ^R (v(x,t), M(x,t))\, dx\stackrel{\eqref{eq:alphaconstraintsBCH}}{=}\\
\int_{-\infty}^\infty\big(((M(x,t), e_1)u(x,t))_x+u_x\frac{u_{xx}-4u}{2}\big)\, dx -\int_{-R}^R \frac{(v,v)_x}{4} \, dx\stackrel{\eqref{eq:uvassD}}{=}0+0.  
\end{align*}
\end{proof} 
\begin{remark} We remark that in general $\int_{-\infty}^\infty (M(x,t), e_\nu) \, dx,\,  \nu=2,\cdots n$,  are not conserved.  
Indeed, the last term in  \eqref{eq:evolalphaMS} contributes 
$\int_{-\infty}^\infty \frac{u_{xx}-4u}{2} v \, dx=\frac12 \int_{\R} u_xv_x \, dx +2 \int_{\R}uv\neq 0$.  
We will, nevertheless, use $\int_\R M(x,t)\, dx$, which, other than the first component, is not conserved.  
Let us denote 
\begin{equation} \label{eq:calM}
\mathcal{M}(t)=\int_\R M(x,t) \, dx .  
\end{equation} 
\end{remark} 
Since $M$ is compactly supported, we have 
\begin{align}  
&u=u_-(t)e^{2 x}, \qquad &x<<0, \label{eq:asminus}\\
&u=u_+(t) e^{-2 x}, \qquad &0<< x, \label{eq:asplus}
\end{align} 
where 
$$ 
u_\pm(t)=-\frac{1}{2} \int_{-\infty}^\infty e^{\pm 2x}(M(x,t),e_1) \, dx. 
$$
We note that 
\begin{align} 
&\frac{a-b_x}{2}=\frac{e_1v_-}{2} - u_- e^{2 x} , &b&=u_-e^{2x}  +\frac {e_1}{\l },\qquad  &x<< 0, \label{eq:abminus} \\
&\frac{a-b_x}{2} = \frac{e_1v_+}{2}+u_+e^{-2x},  &b&=u_+e^{-2x} +\frac{e_1}{\l}, \qquad & x>> 0. \label{eq:abplus} 
\end{align} 
\section{Time evolution} 

For convenience, we recall the specific deformation used in the previous section: 
\begin{equation}\label{eq:D deform}
\Phi_t=\frac{(e_1 v-u_x)}{2} \Phi+ (u+\frac{e_1}{\lambda}) \Phi_x, \qquad v\in \Cl_1, \quad u\in \Cl_0,  
\end{equation} 
 where $\Phi$ satisfies 
\begin{equation} \label{eq:PhiDIVP}
 \Phi_{xx}=(1+\lambda M )\Phi, \qquad \Phi(x) \rightarrow 0, \qquad x\rightarrow -\infty. 
\end{equation}
 We argued in the previous section that the asymptotic behaviour of $\Phi$ as $x<<0$ is of the 
 form $\Phi(x,t,\l)=\mu(t) e^x$.  
 For the forward problem, it would be convenient to use the normalization $\mu(t)=1$, for which 
 \begin{equation} \label{eq:Phi normalization} 
 \Phi(x,t,\lambda)=e^x, \qquad \text{ for   } x<<0. 
 \end{equation}

However, $\Phi$ so normalized is not compatible with \eqref{eq:D deform}.  Indeed, 
in the asymptotic region $x<< 0$, the left hand of \eqref{eq:D deform} is $0$, while the righthand side is 
$(\frac{e_1 v_{-}(t)}{2} +\frac{e_1}{\l})e^x\neq 0$.  We note that \autoref{eq:PhiDIVP} is invariant under the right multiplication by a $t$-dependent invertible element from the Clifford Algebra $\Cl(W)$.   Taking advantage of that freedom, we are seeking a 
$t$-depended gauge transformation $\Omega(t, \l)$ such that $\Psi=\Phi \Omega$ can satisfy 
both \eqref{eq:PhiDIVP} and \eqref{eq:D  deform} with $\Phi$ is normalized as in \autoref{eq:Phi normalization}. 
This latter condition forces the deformation equation for $\Phi$ to be modified.  
\begin{lemma} Under the assymptotic constraint 
$\Phi(x,t,\l)=e^x$, imposed in the region $x<<0$, the deformation equation for $\Phi$ reads 
\begin{equation} \label{eq:t-DOmega} 
\Phi_t+\Phi \Omega_t \Omega^{-1} =\frac{(e_1 v-u_x)}{2} \Phi+ (u+\frac{e_1}{\lambda}) \Phi_x, 
\end{equation} 
with 
\begin{equation} \label{eq:Domegadef}
 \Omega_t \Omega^{-1} = \frac{e_1P_V \mathcal{M}(t)}{2} +\frac{e_1}{\l},    
\end{equation} 
and $\mathcal{M}(t)$ given by \autoref{eq:calM}.  
\end{lemma}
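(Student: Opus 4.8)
The plan is to exploit the gauge relation $\Psi=\Phi\,\Omega$ between the deformation-compatible solution $\Psi$, which by construction obeys both \eqref{eq:PhiDIVP} and the original deformation equation \eqref{eq:D deform}, and the renormalized solution $\Phi$ of \eqref{eq:PhiDIVP} fixed by \eqref{eq:Phi normalization}. First I would recall why such an $\Omega$, depending only on $t$ and $\lambda$, exists: the spatial operator $\Phi\mapsto\Phi_{xx}-(1+\lambda M)\Phi$ acts on $\Cl(W)$ by \emph{left} multiplication, so right multiplication by any $x$-independent invertible element of $\Cl(W)$ sends solutions of \eqref{eq:PhiDIVP} to solutions, and any two solutions decaying like $e^x$ at $-\infty$ (in the sense of \eqref{eq:Phiminus}) differ precisely by one such factor. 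Comparing asymptotics, $\Omega(t,\lambda)$ is exactly the coefficient $\mu(t,\lambda)$ of $\Psi$ appearing in \eqref{eq:Phiminus}.

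Next I would substitute $\Psi=\Phi\Omega$ into \eqref{eq:D deform}. Since $\Omega$ is $x$-independent, $\Psi_t=\Phi_t\Omega+\Phi\Omega_t$ and $\Psi_x=\Phi_x\Omega$; plugging in and multiplying on the right by $\Omega^{-1}$ gives
\[
\Phi_t+\Phi\,\Omega_t\Omega^{-1}=\frac{e_1v-u_x}{2}\,\Phi+\Bigl(u+\frac{e_1}{\lambda}\Bigr)\Phi_x,
\]
which is \eqref{eq:t-DOmega}. To pin down $\Omega_t\Omega^{-1}$, I would evaluate this identity in the region $x\ll 0$ lying to the left of the support of $M$. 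There $\Phi=e^x$ holds exactly, so $\Phi_t=0$ and $\Phi_x=e^x$; by \eqref{eq:asminus}, $u=u_-e^{2x}$ and $u_x=2u_-e^{2x}$ exactly; and $v$ equals the constant $v_-(t)$ of \eqref{eq:uvassD}. Substituting and cancelling the common scalar factor $e^x$, the two contributions $\pm u_-e^{2x}\cdot e^x$ cancel each other — this is exactly the reason the modified equation closes — and one is left with $\Omega_t\Omega^{-1}=\frac{e_1v_-(t)}{2}+\frac{e_1}{\lambda}$.

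Finally, \eqref{eq:uvassD} identifies $v_-(t)=\int_{\R}P_V M(y,t)\,dy$, and since the projection $P_V$ commutes with integration this equals $P_V\mathcal{M}(t)$ with $\mathcal{M}(t)$ as in \eqref{eq:calM}, giving \eqref{eq:Domegadef}.

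I do not expect a genuine obstacle here; the only care needed is bookkeeping — keeping the gauge factor on the right throughout, since it fails to commute with $e_1$, with $v$, and with $M$ — together with the observation that the value of $\Omega_t\Omega^{-1}$ extracted from the single asymptotic region is automatically globally valid: $\Psi$ solving both Lax equations exists by compatibility (\autoref{lem:alphaZCC} and the construction of the preceding section), so $\Phi=\Psi\Omega^{-1}$ necessarily satisfies \eqref{eq:t-DOmega} for every $x$ once $\Omega_t\Omega^{-1}$ is chosen as above.
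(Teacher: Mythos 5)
Your proposal is correct and follows essentially the same route as the paper: write the deformation equation \eqref{eq:D deform} for $\Psi=\Phi\Omega$, multiply on the right by $\Omega^{-1}$ to obtain \eqref{eq:t-DOmega}, and then evaluate in the region $x\ll 0$ where $\Phi=e^x$ to read off $\Omega_t\Omega^{-1}=\frac{e_1 v_-(t)}{2}+\frac{e_1}{\lambda}=\frac{e_1 P_V\mathcal{M}(t)}{2}+\frac{e_1}{\lambda}$. Your version merely spells out details the paper leaves implicit (the $x$-independence of $\Omega$, the cancellation of the $u_-e^{2x}$ terms, and the identification $v_-(t)=P_V\mathcal{M}(t)$ via \eqref{eq:uvassD} and \eqref{eq:calM}).
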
 

\begin{proof}  \autoref{eq:D deform}, when written for $\Psi$, implies
\begin{equation*} 
\Phi_t+\Phi \Omega_t \Omega^{-1} =\frac{(e_1 v-u_x)}{2} \Phi+ (u+\frac{e_1}{\lambda}) \Phi_x
\end{equation*} 
Then, in the asymptotic region $x<< 0$ where $\Phi=e^x$, we 
get 
\begin{equation*}
 \Omega_t \Omega^{-1} =\frac{e_1 v_{-}(t)}{2} +\frac{e_1}{\l}\stackrel{\eqref{eq:calM}}{=} \frac{e_1P_V \mathcal{M}(t)}{2} +\frac{e_1}{\l} .   
\end{equation*}
\end{proof} 
The computation in the asymptotic region $x>>0$, where $\Phi(x,t, \l)=A(t,\l) e^x+B(t,\l) e^{-x}$,  can now be carried out.  
  
\begin{proposition} \label{prop:AB}
\begin{align} 
A_t=J_+A+AJ_-, \qquad & B_t=J_- B+BJ_-+2u_+ A, \label{eq:ABevol}\\
&(Ae_1)_t=[J_+, Ae_	1], \label{eq:Ae1evol}
\end{align} 
where 
\begin{equation} \label{eq:JpmD}
J_{\pm}(t,\l)=\frac{e_1v_{+}(t)}{2}\pm \frac{e_1}{\l}.
\end{equation} 
\end{proposition}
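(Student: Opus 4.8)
The plan is to substitute the large-$x$ asymptotics of $\Phi$, $a$ and $b$ directly into the corrected deformation equation \eqref{eq:t-DOmega} and then compare coefficients of the exponentials $e^{x}$, $e^{-x}$ (and $e^{-3x}$). The first preparatory step is to record the algebraic identity $\Omega_t\Omega^{-1}=-J_-$. This follows from \eqref{eq:Domegadef}, the definition \eqref{eq:calM} of $\mathcal{M}(t)$, and the relation $P_V\mathcal{M}(t)=-v_+(t)$ read off from \eqref{eq:uvassD} (since $v_\pm(t)=\mp\int_{\R}P_V M\,dy$); combined with \eqref{eq:JpmD} this gives $\tfrac{e_1 P_V\mathcal{M}(t)}{2}+\tfrac{e_1}{\l}=-\tfrac{e_1 v_+}{2}+\tfrac{e_1}{\l}=-J_-$.

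Next, in the region $x\gg 0$ I insert $\Phi=Ae^{x}+Be^{-x}$, hence $\Phi_x=Ae^{x}-Be^{-x}$, together with the asymptotic forms $\tfrac{a-b_x}{2}=\tfrac{e_1 v_+}{2}+u_+e^{-2x}$ and $b=u_+e^{-2x}+\tfrac{e_1}{\l}$ from \eqref{eq:abplus}, into \eqref{eq:t-DOmega}. Both sides become $\Cl(W)$-valued linear combinations of $e^{x}$, $e^{-x}$ and $e^{-3x}$; the $e^{-3x}$ contributions on the right-hand side cancel, because they arise with opposite signs from the term $u_+e^{-2x}$ acting on $\Phi$ versus on $\Phi_x$. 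Since $e^{x}$, $e^{-x}$, $e^{-3x}$ are linearly independent over $\R$ as functions of $x$, I may equate coefficients: the $e^{x}$ coefficient yields $A_t-AJ_-=J_+A$, i.e.\ the first equation of \eqref{eq:ABevol}, and the $e^{-x}$ coefficient yields $B_t-BJ_-=J_-B+2u_+A$, i.e.\ the second. The only care needed here is to respect left/right multiplication throughout, since $\Cl(W)$ is noncommutative; no cancellation beyond this bookkeeping is required.

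Finally, for \eqref{eq:Ae1evol} I differentiate $Ae_1$ in $t$ (noting $e_1$ is constant), substitute $A_t=J_+A+AJ_-$, and use the purely algebraic identity $J_-e_1=-e_1 J_+$. This identity is immediate from \eqref{eq:JpmD}: one has $e_1^2=\epsilon_1$ while $v_+\in\Cl_1(V)$ anticommutes with $e_1$, so $J_-e_1=(\tfrac{e_1 v_+}{2}-\tfrac{e_1}{\l})e_1=-\tfrac{\epsilon_1 v_+}{2}-\tfrac{\epsilon_1}{\l}=-e_1(\tfrac{e_1 v_+}{2}+\tfrac{e_1}{\l})=-e_1 J_+$. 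Hence $(Ae_1)_t=(J_+A+AJ_-)e_1=J_+Ae_1+A(J_-e_1)=J_+Ae_1-Ae_1J_+=[J_+,Ae_1]$.

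The main obstacle — indeed the only non-mechanical point — is pinning down the sign conventions: correctly identifying $\Omega_t\Omega^{-1}$ with $-J_-$ (rather than $-J_+$ or $+J_-$) through the $\mp$ in the definition of $v_\pm$, and keeping straight which Clifford factors multiply from the left and which from the right. Once these are fixed, the coefficient comparison and the computation of $(Ae_1)_t$ are routine.
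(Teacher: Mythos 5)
Your proof is correct and follows essentially the same route as the paper: substitute the asymptotic forms from \eqref{eq:abplus} into \eqref{eq:t-DOmega} in the region $x\gg 0$, compare coefficients of $e^{x}$ and $e^{-x}$, and then use $v_+\in\Cl_1(V)$ to get $J_-e_1=-e_1J_+$ and hence the commutator form \eqref{eq:Ae1evol}. The only cosmetic difference is that you package $\Omega_t\Omega^{-1}=-J_-$ up front via $P_V\mathcal{M}=-v_+$ (and explicitly note the $e^{-3x}$ cancellation), whereas the paper carries $v_-$ through and invokes \eqref{eq:uvassD} at the end; these are the same computation.
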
 
\begin{proof} 
We use \eqref{eq:t-DOmega} and \eqref{eq:abplus} to obtain
\begin{align*} 
A _te^x+B_t e^{-x}+(Ae^x+Be^{-x})(\frac{e_1 v_-}{2}+\frac{e_1}{\l})=\\
(\frac{e_1 v_+}{2}+u_+ e^{-2x})(Ae^x+Be^{-x})+(u_+e^{-2x}+\frac{e_1}{\l})(Ae^x-Be^{-x}). 
\end{align*} 
Comparing the coefficients of $e^x$ and $e^{-x}$, along with utilizing \eqref{eq:uvassD}, provides 
\begin{align*}
&A_t=(\frac{e_1 v_+}{2}+\frac{e_1}{\l})A+A(\frac{e_1 v_+}{2}-\frac{e_1}{\l}), \\
&B_t=(\frac{e_1 v_+}{2}-\frac{e_1}{\l})B+B(\frac{e_1 v_+}{2}-\frac{e_1}{\l})+2u_+A,  
\end{align*}
which, by \eqref{eq:JpmD}, proves \eqref{eq:ABevol}.  Finally, since $v_+\in V, e_1 v_+=-v_+e_1$, we obtain  $J_-e_1=-e_1J_+$ thus proving  \eqref{eq:Ae1evol}.  
\end{proof} 
We define 
\begin{equation} \label{eq:defS}
S(t,\l)=A(t,\l)e_1
\end{equation} 
and conclude that, in view of \eqref{eq:Ae1evol}, $(S(t,\l),J_+(t,\l))$ form an algebraic Lax pair, satisfying
\begin{equation} \label{eq:LaxS}
S(t,\l)_t=[J_+(t,\l), S(t,\l)].  
\end{equation} 
We note that the Lax pair $(S(t,\l),J_+(t,\l))$ is a spectral version, depending only on $(t,\lambda)$,  of the original, $x$-dependent,  Lax 
pair $(D_x^2-\lambda M, D_t-(\frac{a-b_x}{2}+bD_x))$.

 \section{Finite beam} 
 \subsection{ Peakon equations} 
 Consider now the case of a finite, discrete measure
\begin{equation*} 
M=\sum_{j=1}^N M_j \delta_{x_j}, \quad x_1<x_2<\cdots<x_N, 
\end{equation*}
where $M_j\in \Cl_1 $.    We will only focus on a special case of invertible $M_j$.  
The system of PDEs \eqref{eq:evolalphaMS} and \eqref{eq:alphaconstraintsBCH}
reduces to a system of ODEs
\begin{align} 
\dot x_j&=-u(x_j), \label{eq:dotxj}\\
\dot M_j&=u_x (x_j)M_j +\frac12[e_1v(x_j), M_j],    \label{eq:dotMj} 
\end{align} 
$u_x(x_j), v(x_j)$ are the arithmetic averages of $u_x(x)$, $v(x)$, respectively, evaluated at the point of $x_j(t)$. 
We note that, since $u_x(x_j) $ and $ v(x_j)$ are continuous functions of $t$, the second equation implies that $M_j$ remains invertible under the time flow, provided $M_j(t=0)$ is invertible.  

The general form of the system \eqref{eq:dotxj} and \eqref{eq:dotMj} is not suitable for computational purposes.  We will now elaborate on 
two other, equivalent, presentations of the system \eqref{eq:dotxj} and \eqref{eq:dotMj}.   

First, with the help of \eqref{eq:uD} and \eqref{eq:vD}, we can display this system of equations in a more 
concrete fashion
\begin{align}
\dot x_j&=\frac12 \sum_{k=1}^N (M_k,e_1) e^{-2\abs{x_j-x_k}},\label{eq:dotxjbis} \\
\dot M_j&=\sum_{k=1}^N \sgn (x_j-x_k)\big((M_k, e_1)M_j  e^{-2\abs{x_j-x_k}}+(M_j, e_1)M_k-(M_j,M_k) e_1\big). \label{eq:dotMjbis}
\end{align}

Recall that by \autoref{prop:m1CL} the projection of the total mass $\mathcal{M}$ on $e_1$ is conserved.  Thus 
$$ 
(\mathcal{M}, e_1)=\sum_{j=1}^N (M_j, e_1)
$$ 
is constant; this result can be easily confirmed from \eqref{eq:dotMjbis}.     
The other way of writing these equations that is suitable for numerical computations is given by the 
following proposition.  
\begin{proposition} \label{prop:Clifford Peakons}
Let $m_{j\mu}=(M_j, e_\mu), \,\, 1\leq j\leq N, 1\leq \mu\leq d$.   Then \eqref{eq:dotxjbis} and \eqref{eq:dotMjbis}
are equivalent to 
\begin{empheq}[box=\fbox]{align*} 
&\dot x_j=\frac12 \sum_{k=1}^N m_{k1} e^{-2\abs{x_j-x_k}}, \\
& \dot m_{j1}=m_{j1} \sum_{k=1}^N \sgn(x_j-x_k)m_{k1}  e^{-2\abs{x_j-x_k}}-\epsilon_1\sum_{k=1}^N \sgn(x_j-x_k)\sum_{\mu=2}^d \epsilon_\mu m_{j\mu} m_{k\mu}, \\
&\dot m_{j\mu}=m_{j\mu}\sum_{k=1}^N \sgn(x_j-x_k)m_{k1} e^{-2\abs{x_j-x_k}}+m_{j1}\sum_{k=1}^N \sgn(x_j-x_k) m_{k\mu}, \quad 2\leq \mu\leq d.  
\end{empheq}

\end{proposition}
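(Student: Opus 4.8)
The plan is to derive the boxed system directly from the Clifford-valued equations \eqref{eq:dotxjbis}--\eqref{eq:dotMjbis} by expanding in the fixed orthonormal basis and pairing against each $e_\mu$ with the bilinear form $B$. First I would set up the dictionary between the pairing variables $m_{j\mu}=(M_j,e_\mu)$ and the coordinate expansion of $M_j$. Since $(e_\mu,e_\nu)=\epsilon_\mu\delta_{\mu\nu}$, writing $M_j=\sum_\nu t_{j\nu}e_\nu$ gives $m_{j\mu}=t_{j\mu}\epsilon_\mu$, hence $M_j=\sum_\mu \epsilon_\mu m_{j\mu}e_\mu$ and $(M_j,M_k)=\sum_\mu \epsilon_\mu m_{j\mu}m_{k\mu}$. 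Because $M_j\mapsto(m_{j1},\dots,m_{jd})$ is a linear isomorphism $\Cl_1(W)\to\R^d$, rewriting the system in these coordinates is an equivalence in both directions, not merely an implication.

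Next I would treat the equations one at a time. The equation for $\dot x_j$ is immediate, since $(M_k,e_1)=m_{k1}$ turns \eqref{eq:dotxjbis} verbatim into the first boxed equation. For $\dot M_j$, I note that every term on the right of \eqref{eq:dotMjbis} is a scalar multiple of $M_j$, $M_k$, or $e_1$, hence lies in $\Cl_1(W)$; thus $\frac{d}{dt}m_{j\mu}=(\dot M_j,e_\mu)$ and I may apply $(\,\cdot\,,e_\mu)$ termwise by bilinearity, obtaining
\begin{equation*}
\dot m_{j\mu}=\sum_{k=1}^N \sgn(x_j-x_k)\Big(m_{k1}\,m_{j\mu}\,e^{-2\abs{x_j-x_k}}+m_{j1}\,m_{k\mu}-(M_j,M_k)\,(e_1,e_\mu)\Big).
\end{equation*}

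Then I would split on the value of $\mu$. For $2\le\mu\le d$ one has $(e_1,e_\mu)=0$, the last term drops, and pulling $m_{j\mu}$ and $m_{j1}$ out of the $k$-sums yields exactly the third boxed equation. For $\mu=1$ one has $(e_1,e_1)=\epsilon_1$, and substituting $(M_j,M_k)=\epsilon_1 m_{j1}m_{k1}+\sum_{\nu=2}^d\epsilon_\nu m_{j\nu}m_{k\nu}$ makes the $-\epsilon_1(M_j,M_k)$ contribution cancel the $+m_{j1}m_{k1}$ coming from the middle term, leaving $-\epsilon_1\sum_{\nu=2}^d\epsilon_\nu m_{j\nu}m_{k\nu}$, which is the second boxed equation. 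I do not expect a genuine obstacle: the computation is routine, and the only delicate point is bookkeeping of the signature factors $\epsilon_\mu$ — in particular not conflating the coefficient of $e_\mu$ in $M_j$ with the pairing $(M_j,e_\mu)$, and correctly isolating the $\mu=1$ contribution to $(M_j,M_k)$ when simplifying the $\dot m_{j1}$ equation.
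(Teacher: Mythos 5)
Your argument is correct: pairing \eqref{eq:dotMjbis} with $e_\mu$, using $(e_\mu,e_\nu)=\epsilon_\mu\delta_{\mu\nu}$ and $(M_j,M_k)=\sum_\mu\epsilon_\mu m_{j\mu}m_{k\mu}$, and noting that $M_j\mapsto(m_{j1},\dots,m_{jd})$ is a linear isomorphism by nondegeneracy of $B$, gives exactly the boxed system, including the cancellation of the $m_{j1}m_{k1}$ term via $\epsilon_1^2=1$ in the $\mu=1$ case. The paper states this proposition without proof, and your computation is precisely the routine verification it leaves implicit.
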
 
\subsection{Peakon Sector; $N=2, d=2$} 
The remainder of the paper is about the original case inspired by the Euler-Bernoulli beam equation and its two-component CH counterpart.  
We work with the concrete matrix representation of $\Cl (\R^2)$ with signature $(+,-)$.  
Thus 
\begin{equation} 
e_1=\begin{bmatrix} 0&1\\1&0 \end{bmatrix}, \quad e_2=\begin{bmatrix} 0&-1\\1&0 \end{bmatrix}, \qquad 
e_1e_2=\begin{bmatrix} 1&0\\0&-1 \end{bmatrix}=\sigma_3, 
\end{equation} 
and $\Cl(\R^2)$ is spanned by $<\bo=I_2, e_1, e_2, e_1e_2>$.  
In this representation, $M$ is given by 
\begin{equation} 
M(t)=\sum_{j=1}^N M_j(t) \delta_{x_j(t)}, \qquad M_j(t)=\begin{bmatrix} 0&n_j(t)\\m_j(t)&0 \end{bmatrix} 
 , \quad m_j(0)>0, n_j(0)>0.  
\end{equation} 
In the case of $N=2$, we will write the peakon equations  implied by \autoref{prop:Clifford Peakons}, after a simple change of variables,  as 
\begin{proposition} \label{prop:twopeakons}
\begin{align} 
\dot x_1&=\frac14\big[(m_1+n_1)+(m_2+n_2) e^{-2\abs{x_1-x_2}}\big], \label{eq:x1}\\
\dot x_2&=\frac14\big[(m_1+n_1)e^{-2\abs{x_1-x_2}}+(m_2+n_2) \big], \label{eq:x2}\\
\dot m_1&=\frac{m_1}{2}\big[ -(m_2+n_2)e^{-2\abs{x_1-x_2}}-(m_2-n_2) \big], \label{eq:m1}\\
\dot n_1&=\frac{n_1}{2}\big[ -(m_2+n_2)e^{-2\abs{x_1-x_2}}+(m_2-n_2) \big], \label{eq:n1}\\
\dot m_2&=\frac{m_2}{2}\big[ (m_1+n_1)e^{-2\abs{x_1-x_2}}+(m_1-n_1) \big],  \label{eq:m2}\\
\dot n_2&=\frac{n_2}{2} \big[ (m_1+n_1)e^{-2\abs{x_1-x_2}}-(m_1-n_1)\big]. \label{eq:n2}
  \end{align} 
  \end{proposition}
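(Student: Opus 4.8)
The plan is to obtain \autoref{prop:twopeakons} as a direct specialization of \autoref{prop:Clifford Peakons} to $N=2$, $d=2$ with the Minkowski signature $\epsilon_1=+1$, $\epsilon_2=-1$, followed by the linear change of variables hinted at in the statement. The first step is to make that change of variables explicit. In the fixed matrix representation, $M_j=m_{j1}e_1+m_{j2}e_2=\begin{bmatrix}0&m_{j1}-m_{j2}\\ m_{j1}+m_{j2}&0\end{bmatrix}$, so comparison with $M_j=\begin{bmatrix}0&n_j\\ m_j&0\end{bmatrix}$ yields $m_j=m_{j1}+m_{j2}$, $n_j=m_{j1}-m_{j2}$, equivalently $m_{j1}=\tfrac12(m_j+n_j)$ and $m_{j2}=\tfrac12(m_j-n_j)$. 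In particular $m_{j1}=(M_j,e_1)$ is the component carrying the exponential interaction, while $m_{j2}=(M_j,e_2)$ is the single transverse component $P_VM_j$ entering the long-range terms.

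Next I would substitute. The position equations are immediate: for $N=2$, $\dot x_j=\tfrac12\sum_{k=1}^2 m_{k1}e^{-2\abs{x_j-x_k}}$ with the $k=j$ term carrying $e^{0}=1$, so writing $m_{k1}=\tfrac12(m_k+n_k)$ gives \eqref{eq:x1}--\eqref{eq:x2}. For the mass equations I would specialize the two formulas of \autoref{prop:Clifford Peakons} for $\dot m_{j1}$ and $\dot m_{j\mu}$: with $d=2$ the inner sum over $\mu$ collapses to $\mu=2$, the factor $-\epsilon_1\epsilon_2$ becomes $+1$, and for fixed $j$ the sum over $k$ retains only the single off-diagonal term, which carries $\sgn(x_j-x_k)$ (equal to $-1$ if $j=1$ and $+1$ if $j=2$, since $x_1<x_2$). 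Then $\dot m_j=\dot m_{j1}+\dot m_{j2}$ and $\dot n_j=\dot m_{j1}-\dot m_{j2}$; adding and subtracting the two component equations and repeatedly factoring $m_k\pm n_k$ out of the bilinear products collapses everything into \eqref{eq:m1}--\eqref{eq:n2}. As a quick consistency check, the coefficient of $e^{-2\abs{x_1-x_2}}$ in $\dot m_1$ should equal $-\tfrac12 m_1(m_2+n_2)$, which is exactly what collecting the $m_{11}m_{21}$ and $m_{12}m_{21}$ contributions produces.

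I do not anticipate a genuine obstacle: this is a bookkeeping verification rather than a conceptual one. The only points demanding care are (i) the sign of $\sgn(x_j-x_k)$ dictated by the ordering $x_1<x_2$, which must be propagated consistently into both the exponential and the non-exponential long-range terms; (ii) the sign coming from $\epsilon_2=-1$, which turns the $-\epsilon_1\sum_{\mu\ge 2}\epsilon_\mu m_{j\mu}m_{k\mu}$ term in the $\dot m_{j1}$ equation into a $+m_{j2}m_{k2}$ contribution; and (iii) remembering that the $k=j$ term survives in the exponential sums but is killed by $\sgn(0)=0$ in the sums multiplied by $\sgn(x_j-x_k)$. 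Once these are handled, the stated system follows by elementary linear algebra in the $2\times2$ pair $(m_{j1},m_{j2})\leftrightarrow(m_j,n_j)$.
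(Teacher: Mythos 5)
Your proposal is correct and follows exactly the route the paper intends: Proposition~\ref{prop:twopeakons} is stated as the $N=2$, $d=2$, $(\epsilon_1,\epsilon_2)=(+1,-1)$ specialization of Proposition~\ref{prop:Clifford Peakons} under the change of variables $m_j=m_{j1}+m_{j2}$, $n_j=m_{j1}-m_{j2}$, and your substitution, sign bookkeeping for $\sgn(x_j-x_k)$ and $\epsilon_2=-1$, and the factorizations $(m_{j1}\pm m_{j2})=m_j$ or $n_j$ reproduce \eqref{eq:x1}--\eqref{eq:n2} exactly.
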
 

We recall that $\{S=Ae_1, J_+\}$ is the Lax pair for the deformed Euler-Bernoulli beam, mapped to the real axis,  in the asymptotic region $x>>0$ (see \eqref{eq:LaxS}).  We think of this pair as a \emph{spectral Lax pair} since it does not depend on $x$; it depends only on $(t,\l)$.  

We will prove elsewhere that 
 
\begin{equation*}
A=\mathbf{I}+\frac \lambda 2 (M_1+M_2)+ \big(\frac \lambda 2\big)^2 \big( 1-e^{-2(x_2-x_{1})}\big)M_2M_1,
\end{equation*}
which can be written explicitly as 
\begin{equation} 
A=\begin{bmatrix} 1+\big(\frac \lambda 2 \big)^2\big( 1-e^{-2(x_2-x_{1})}\big)n_2m_1& \frac \lambda 2(n_1+n_2)\\
\frac \lambda2 (m_1+m_2)& 1+\big(\frac \lambda 2\big)^2\big( 1-e^{-2(x_2-x_{1})}\big)m_2n_1\end{bmatrix}, 
\end{equation}
 giving the first constant of motion: 
 \begin{equation*}
I_1=n_1+n_2+m_1+m_2.
\end{equation*} 
Likewise, the determinant $\det(S)$, or, equivalently, $\det(A(\lambda))$,  is another source of constants of motion.  
The outcome of a simple computation is
\begin{align*} 
&\det(A(\l))=\\&1-\big(n_1m_1+n_2m_2 +(n_2m_1+m_2n_1)e^{-2(x_2-x_1)} \big) \big(\frac \lambda2\big)^2+ \big(1-e^{-2(x_2-x_1)}\big)^2 n_2m_2n_1m_1 \big(\frac\lambda2\big)^4,  
 \end{align*} 
 giving us two additional constants of motion: 
\begin{equation*} 
I_2=n_1m_1+n_2m_2 +(n_2m_1+m_2n_1)e^{-2(x_2-x_1)}  , \qquad I_3=\big(1-e^{-2(x_2-x_1)}\big)^2 n_2m_2n_1m_1.  
\end{equation*} 
In particular, since $m_j(0), n_j(0)>$ and the masses do not switch signs, the masses cannot collide since $I_3\neq 0$ and all masses are bounded.

Let us briefly explain the connection to the Riemann surface. This is important because the solutions exhibit periodic behaviour, which can be traced back to the Jacobian of that Riemann surface, although this periodic behaviour is not immediately apparent from the system in 
\autoref{prop:twopeakons}.  
The gist of the argument is simple: since $S$ satisfies the Lax equation \eqref{eq:LaxS} 
the Riemann surface $X$ of the affine curve 
$\det[ z\mathbf I -S]=0$ is time invariant; it is given explicitly by $
z^2=I_1 z+\det(A(\l)), $ or, after performing a simple affine transformation, 
\begin{center} 
\boxed{z^2=P_{4}(\lambda), } 
\end{center} 
where $P_4(\l)$ is a polynomial of degree $4$.  
We will show elsewhere that for $N$ peakons $z^2=P_{2N}(\l)$ and the genus of $X$ is 
generically $N-1$.  Thus, for two peakons, the associated Riemann surface $X$ has genus $1$.  

\subsection{Asymptotic behaviour} 
We discuss in this section the qualitative difference between peakons in the CH theory, which we will refer to as CH1, versus the peakons in the CH2 theory given by \eqref{eq:2CH} and \eqref{eq:2CHconstraints}.  
The detailed analysis will be presented in a separate publication, while here we only highlight the most striking features of CH2 that distinguish it from the CH1 theory.  
For simplicity, we will ignore proportionality constants and the differences in signs (the CH2 peakons in this paper have a negative sign in front (see \eqref{eq:uD}).  Thus will use 
\begin{equation} 
u_{CH1} (x,t)=m_1(t) e^{-2\abs{x-x_1(t)}} +m_2(t) e^{-2\abs{x-x_2(t)}}, 
\end{equation} 
for CH1 peakons and 
\begin{equation} 
u_{CH2} (x,t) = (m_1(t) + n_1(t)) e^{-2|x - x_1(t)|} + (m_2(t) + n_2(t)) e^{-2|x - x_2(t)|},
\end{equation} 
for CH2 peakons.  The dynamics of two peakons for the CH1 equation have been analyzed multiple times (see \cite{camassa-holm} or the recent review paper by one of us \cite{lundmark-szmigielski:review}).  In both cases, the wave function
$u(x,t)$ 
represents a superposition of two exponentially localized peaks. Each peak is centered at a time-dependent position \( x_1(t) \) or \( x_2(t) \), and has a time-dependent amplitude $m_1(t), m_2(t)$ for the CH1 and  \( m_1(t) + n_1(t) \) or \( m_2(t) + n_2(t) \), for CH2, respectively. These profiles decay exponentially away from their centers, forming a wave with two moving and interacting peaks (see \autoref{fig:2p}).   

For CH1, as one sees from \autoref{fig: CH1_figures}, the peaks initially interact. Then they separate and behave as if there were free particles moving with constant momenta, since both $m_1(t)$ and $m_2(t)$ become very quickly constant.  

The situation for CH2 peakons is sharply different.  Indeed, \autoref{fig: CH2_figures} 
shows that the peaks interact but then become approximately periodic after some transient behaviour.  This visualization highlights how the amplitudes of the two components evolve and interact. You can see differences in their transient behaviour and how they may settle into periodic or synchronized patterns.  The amplitudes \( m_1(t) + n_1(t) \) and \( m_2(t) + n_2(t) \) are not asymptotically constant; they oscillate over time. These oscillations are not in phase; one peak leads the other, creating a choreographed interaction. This behaviour suggests energy exchange between the two peaks, akin to two dancers passing momentum back and forth.  This is reminiscent of the concept of 
\emph{walzing peakons} discussed by Cotter, Holm, Ivanov and Percival in  \cite{Holm:Waltzing} occurring in a non-Lax integrable cross-coupled CH equation.  As time progresses, the CH2 system settles into a periodic regime. Both the positions and amplitudes of the peaks oscillate with a shared rhythm. This can be interpreted as a bound state or breather-like solution, where the wave maintains a dynamic equilibrium without dispersing.

In conclusion, the wave $ u_{CH2}(x,t) $  exhibits a rich interplay of motion and oscillation. It is both moving and dancing:
\begin{itemize}
  \item \textbf{Moving:} The peaks travel through space, indicating propagation.
  \item \textbf{Dancing:} The amplitudes oscillate, indicating internal dynamics and interaction.
\end{itemize}
This dual behaviour reflects a complex but structured evolution characteristic of nonlinear wave systems.  This behaviour of the CH2 equations \eqref{eq:2CH}, \eqref{eq:2CHconstraints}, as illustrated on an example of two peakons, can be described as ``moving and dancing'', where the wave peaks travel and their amplitudes oscillate in a choreographed manner.  A closely related phenomenon appears in resonant‐wave theory and soliton dynamics. For instance, in shallow‐water contexts, the three‐wave resonant interaction equations model spatially separated wave modes that continuously exchange energy in a phase‐locked fashion \cite{enwiki:three-wave}. Similarly, Onorato et al. (2003) observed modulational instability driving alternating energy transfer between spatially distinct shallow‐water wave packets \cite{Onorato:published}. An analogous effect of alternating energy beating between Fourier modes has also been demonstrated in coupled nonlinear theory, such as weakly interacting blue‑shifted modes in the nonlinear Schrödinger equation (e.g., \cite{griffits:instability}[Sect.2],  with near‑periodic exchanges). Even more strikingly, in optics, “breathing soliton molecules” are separated pulse pairs whose internal vibrations remain synchronized, with energy exchanged in lockstep \cite{wu:PhysRev}, much like our two-peakon system. These examples underscore how non‐dissipative, separated structures can remain dynamically coupled via conserved quantities or resonant interactions—exactly the mechanism we uncover in the two-peakon dynamics.

\begin{figure}[htbp]
\centering
\subfloat[Positions of peaks for two CH1 peakons]{\includegraphics[width=0.45\textwidth]{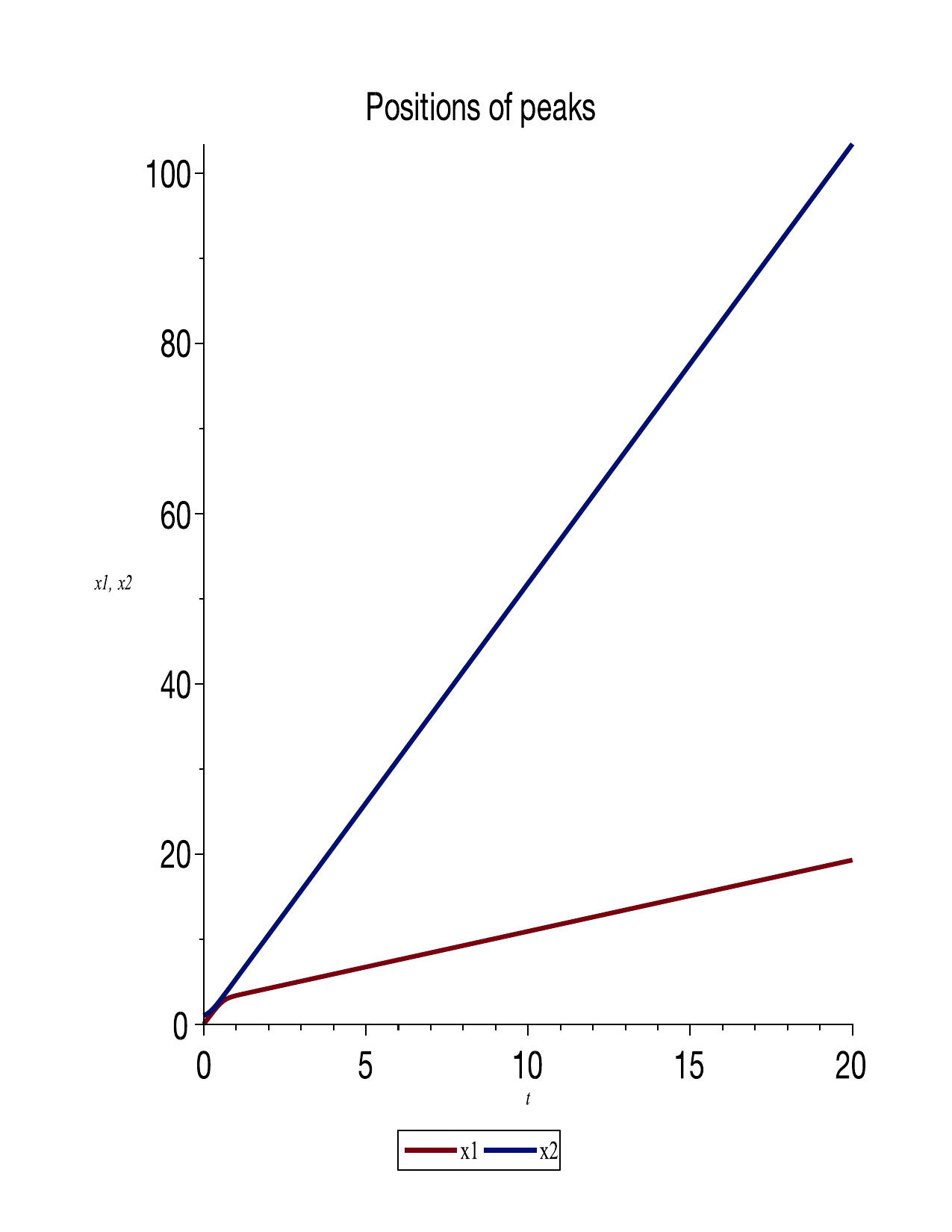}}
\hfill
\subfloat[Amplitudes of peaks for two CH1 peakons]{\includegraphics[width=0.45\textwidth]{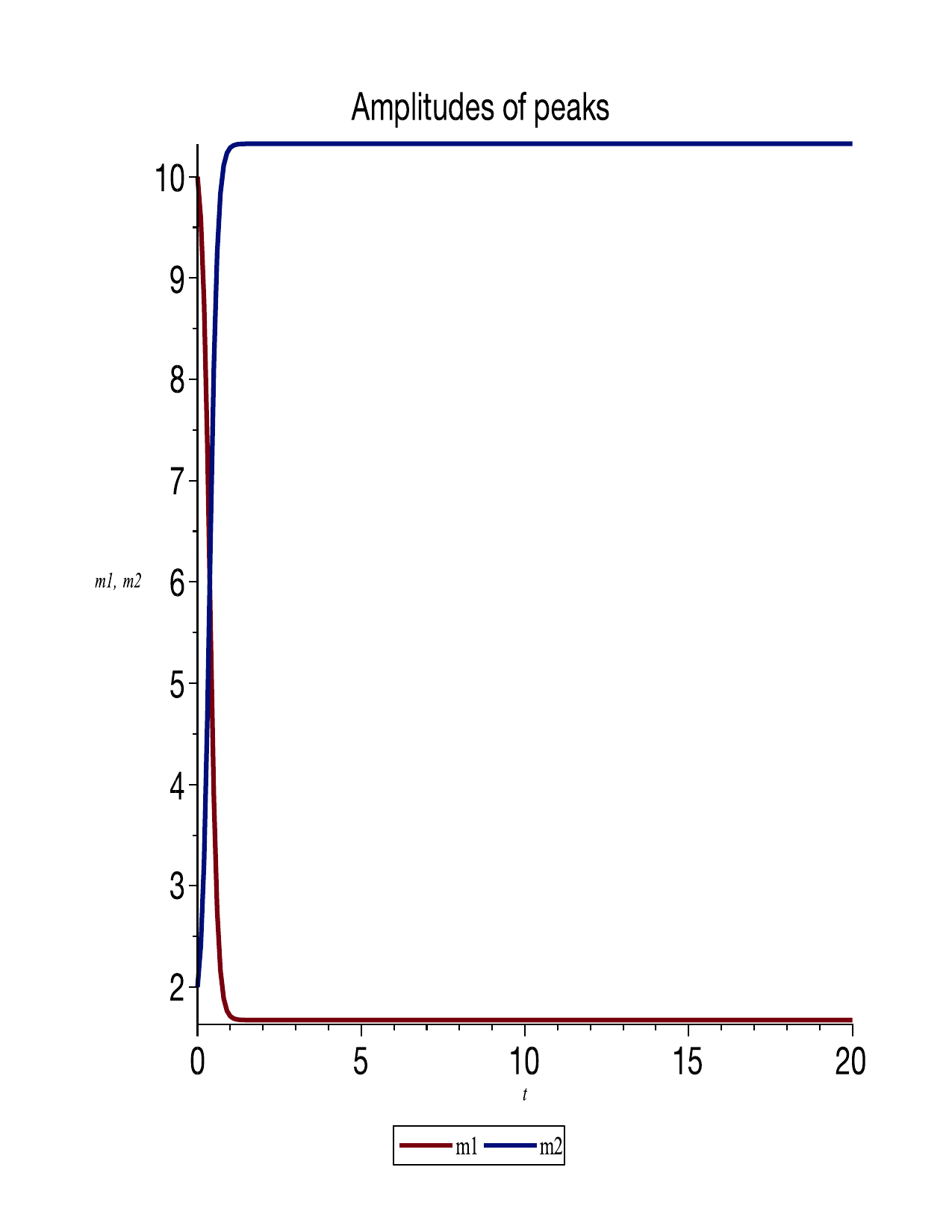}}
\caption{CH1 peakons}
\label{fig: CH1_figures}
\end{figure} 
\begin{figure}[htbp]
\centering
\subfloat[Positions of peaks for two CH2 peakons]{\includegraphics[width=0.45\textwidth]{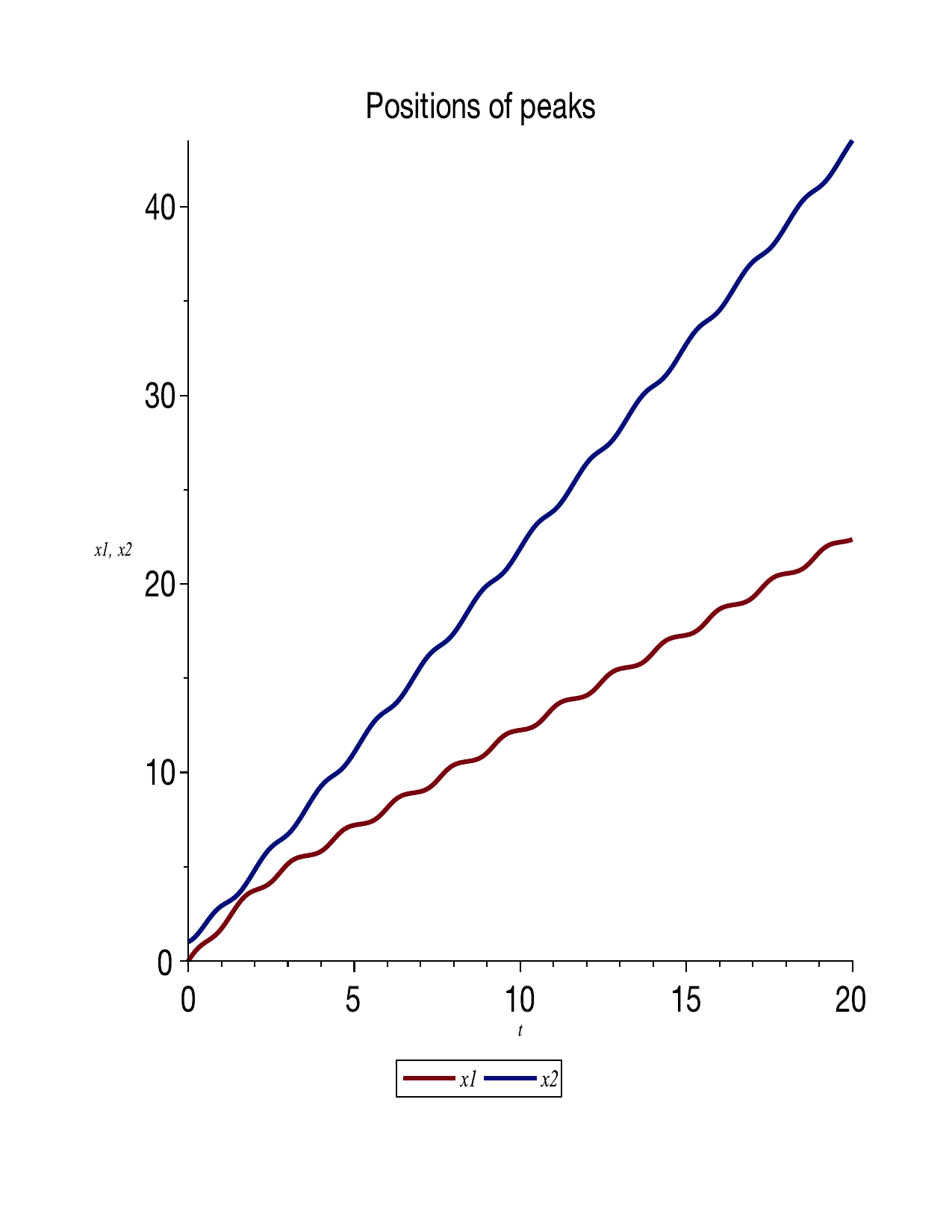}}
\hfill
\subfloat[Amplitudes of peaks for two CH2 peakons]{\includegraphics[width=0.45\textwidth]{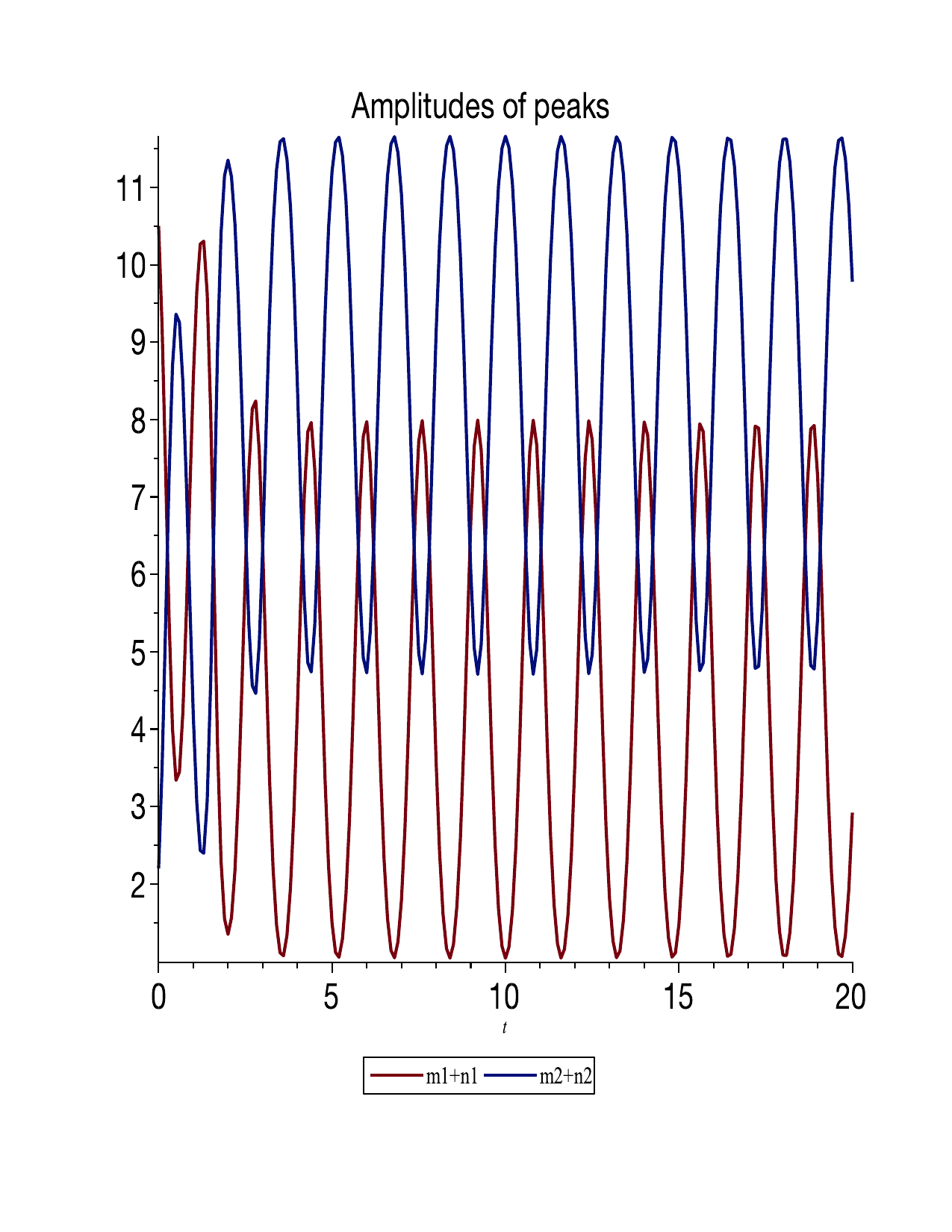}}
\caption{CH2 peakons}
\label{fig: CH2_figures}
\end{figure}

%

%

\newpage
\section{Acknowledgements}

Jacek Szmigielski's research is supported by the Natural Sciences and Engineering Research Council of Canada (NSERC).   We wish to thank 
A. Hone and V. Novikov for stimulating discussions and for bringing the reference to Olver and Sokolov \cite{Olver-Sokolov:IS-associative} to our attention.  We also thank X. Chang for his comments on the manuscript and 
S.Boscolo for consultation regarding \cite{wu:PhysRev}.  

\def\cydot{\leavevmode\raise.4ex\hbox{.}}
  \def\cydot{\leavevmode\raise.4ex\hbox{.}}



\begin{thebibliography}{10}

\bibitem{barcilon-beam-royal}
V.~Barcilon.
\newblock Inverse problem for the vibrating beam in the free--clamped
  configuration.
\newblock {\em Philosophical Transactions of the Royal Society of London.
  Series A, Mathematical and Physical Sciences}, 304(1483):211--251, 1982.

\bibitem{beals-sattinger-szmigielski:1998:acoustic-scattering-KdV-hierarchy}
R.~Beals, D.~H. Sattinger, and J.~Szmigielski.
\newblock Acoustic scattering and the extended {Korteweg}--de {Vries}
  hierarchy.
\newblock {\em Adv. Math.}, 140(2):190--206, 1998.

\bibitem{beals-sattinger-szmigielski:stieltjes}
R.~Beals, D.~H. Sattinger, and J.~Szmigielski.
\newblock Multi-peakons and a theorem of {S}tieltjes.
\newblock {\em Inverse Problems}, 15(1):L1--L4, 1999.

\bibitem{beals-szmigielski:2021p:2CH-euler-bernoulli-beam-noncommutative-continued-fractions}
R.~Beals and J.~Szmigielski.
\newblock A 2-component {C}amassa--{H}olm equation, {E}uler--{B}ernoulli beam
  problem, and noncommutative continued fractions.
\newblock {\em Comm. Pure Appl. Math.}, 76(10):2335--2371, 2023.

\bibitem{camassa-holm}
R.~Camassa and D.~D. Holm.
\newblock An integrable shallow water equation with peaked solitons.
\newblock {\em Phys. Rev. Lett.}, 71(11):1661--1664, 1993.

\bibitem{Holm:Waltzing}
C.~J. Cotter, D.~D. Holm, R.~I. Ivanov, and J.~R. Percival.
\newblock Waltzing peakons and compacton pairs in a cross-coupled
  {C}amassa--{H}olm equation.
\newblock {\em Journal of Physics A: Mathematical and Theoretical},
  44(26):265205, June 2011.

\bibitem{geng-wang:Coupled-CH}
X.~Geng and H.~Wang.
\newblock Coupled {C}amassa--{H}olm equations, {$N$}-peakons and infinitely
  many conservation laws.
\newblock {\em J. Math. Anal. Appl.}, 403(1):262--271, 2013.

\bibitem{griffits:instability}
S.~Griffiths, R.~Grimshaw, and K.~Khusnutdinova.
\newblock Modulational instability of two pairs of counter-propagating waves
  and energy exchange in a two-component system.
\newblock {\em Physica D: Nonlinear Phenomena}, 214(1):1--24, 2006.

\bibitem{lundmark-szmigielski:review}
H.~Lundmark and J.~Szmigielski.
\newblock A view of the peakon world through the lens of approximation theory.
\newblock {\em Phys. D}, 440:Paper No. 133446, 44, 2022.

\bibitem{Olver-Sokolov:IS-associative}
P.~J. Olver and V.~V. Sokolov.
\newblock Integrable evolution equations on associative algebras.
\newblock {\em Comm. Math. Phys.}, 193(2):245--268, 1998.

\bibitem{Onorato:published}
M.~Onorato, D.~Ambrosi, A.~R. Osborne, and M.~Serio.
\newblock Interaction of two quasi-monochromatic waves in shallow water.
\newblock {\em Physics of Fluids}, 15(12):3871--3874, 12 2003.

\bibitem{enwiki:three-wave}
{Wikipedia contributors}.
\newblock Three-wave equation --- {Wikipedia}{,} the free encyclopedia, 2025.
\newblock [Online; accessed 14-August-2025].

\bibitem{wu:PhysRev}
X.~Wu, J.~Peng, S.~Boscolo, C.~Finot, and H.~Zeng.
\newblock Synchronization, desynchronization, and intermediate regime of
  breathing solitons and soliton molecules in a laser cavity.
\newblock {\em Phys. Rev. Lett.}, 131:263802, Dec 2023.

\end{thebibliography}

\end{document}